\documentclass{cccg25}
\usepackage{graphicx,amssymb,amsmath}
\usepackage[]{todonotes} 
\usepackage{gensymb}

\usepackage{caption}
\usepackage{subcaption}
\usepackage[nocompress]{cite}
\usepackage{hyperref}





\title{Optimal Parallel Algorithms for Convex Hulls in 2D and 3D under Noisy Primitive Operations}

\author{Michael T. Goodrich\thanks{University of California, Irvine, \texttt{goodrich@uci.edu}. Research supported in part by NSF grant CCF-2212129.}
	\and
	Vinesh Sridhar\thanks{University of California, Irvine, \texttt{vineshs1@uci.edu}.}}


\index{Goodrich, Michael T.}
\index{Sridhar, Vinesh}


\begin{document}
\thispagestyle{empty}
\maketitle

\begin{abstract}
  In the noisy primitives model, each primitive comparison performed
  by an algorithm, e.g., testing whether one value is greater than
  another, returns the incorrect answer with random, independent
  probability $p < 1/2$ and otherwise returns a correct answer.
  This model was first applied in the context of sorting and
  searching, and recent work by Eppstein, Goodrich, and Sridhar
  extends this model to sequential 
  algorithms involving geometric primitives such as orientation
  and sidedness tests. 
  However, their approaches appear to be inherently sequential; hence,
  in this paper,
  we study parallel computational geometry algorithms for 2D and 3D
  convex hulls in the noisy primitives model.
  We give the first optimal parallel algorithms in the noisy
  primitives model for 2D and 3D convex hulls in the CREW PRAM model. The
  main technical contribution of our work concerns our ability to
  detect and fix errors during intermediate steps of our algorithm
  using a generalization of the failure sweeping technique.
\end{abstract}

\section{Introduction}
Fault tolerance is an important issue in parallel algorithm design,
since parallelism naturally introduces more opportunities for faults to occur.
In addition, applications can involve
the use of faulty primitive operations, such as comparisons.
It
is ideal if our algorithms can tolerate such noisy primitives.
Accordingly,
in this paper, we are interested in the design of efficient parallel
computational geometry
algorithms where primitive operations (such as orientation tests) 
can return the wrong answer with a fixed probability,
$p<1/2$.

Motivation for this model of noise comes, for example,
from potential applications involving quantum computing.
A quantum computer is used to answer primitive queries, 
which return an incorrect
answer with a fixed probability at most $p<1/2$; 
see, e.g.,~\cite{allcock2023quantumtimecomplexitydivide,iwama,math11224707}. 
A second motivation for this noise model comes from distributed computing
applications, in which hash 
functions can be used to reduce communication 
costs at the expense of introducing comparison errors, such as in work
by Viola~\cite{Vio-Comb-15}.
For example, in one problem studied by Viola, two participants with
$n$-bit values seek to determine which of their two values is
largest. This can be done by a noisy binary search for the highest-order
differing bit position. Each search step performs a noisy equality
test on two prefixes of the inputs, by exchanging single-bit hash
values on prefixes. The result is an $O(\log n)$ bound on the randomized
communication complexity of the problem. 

A simple observation, made for sorting by 
Feige, Raghavan, Peleg, and Upfal~\cite{feige1994computing},
is that we can use any parallel algorithm based on correct
primitives by performing each primitive operation 
$O(\log n)$ times (either sequentially or in parallel) 
and returning the majority answer.  
This guarantees correctness w.h.p., 
but increases the span or work of the parallel algorithm
by a $\log n$ factor. 
In this paper, we are interested in the design
of
parallel computational geometry algorithms with noisy primitive operations
that are correct w.h.p. without incurring this overhead. 

\paragraph{Related Prior Work.}
The model for fault tolerance has been extensively studied for 
binary comparisons for 1-dimensional values. 
However, the only prior work we are aware of
studying fault-tolerant computational 
geometry algorithms does so in the sequential setting. 
A recent work by Eppstein, Goodrich,
and Sridhar~\cite{eppstein2025computational} use a technique based on random walks
  in DAGs to develop optimal sequential algorithms for several
  computational geometry problems in this setting. 
 Their methods seem to be inherently sequential, however,
in that their
  2D convex hull algorithm uses a noise-tolerant version of
the Graham scan algorithm, and their sequential 3D convex
  hull algorithm applies their random walks technique to a randomized
  incremental construction (RIC) algorithm, which seems to be a poor candidate
for the RIC parallelization techniques of 
Blelloch, Gu, Shun, and Sun~\cite{blelloch2020parallelism,blelloch2020randomized}, 
due to their methods needing correct primitive operations.
Thus, to achieve efficient parallel algorithms in the noisy primitives
model, new approaches are needed.

Work on searching and sorting 1-dimensional data
with faulty comparisons originates with
R{\'e}nyi~\cite{renyi}, who studied a binary search problem
in 1961
where comparisons between two values return the wrong answer independently with
probability $p<1/2$; see also, e.g., Pelc~\cite{PELC1989185,PELC200271}.
In 1994, 
Feige, Raghavan, Peleg, and Upfal~\cite{feige1994computing} showed
how to sort $n$ items in $O(\log n)$ span with $O(n\log n)$ work in the
CRCW PRAM model, with high probability,\footnote{
  In this paper, we take ``with high probability'' (w.h.p.) to mean that the failure probability
  is at most $1/n^c$, for some constant $c\ge 1$.}
in the same noisy comparison model,
and
Leighton, Ma, and Plaxton~\cite{LEIGHTON1997265} show how to achieve
these bounds in the EREW PRAM model
(we review the PRAM model of computation in Appendix~\ref{sec:parallel-review}).
Both algorithms make extensive use of the
AKS sorting network~\cite{aks1,aks2},
however, which implies that their asymptotic performance bounds
have very large constant factors.
Recently, in a SPAA 2023 paper, 
Goodrich and Jacob~\cite{goodrich2023optimal} 
provide 
an efficient parallel sorting algorithm where comparisons
are either persistently or independently faulty
without using 
the AKS sorting network. 
Since computations cannot be fully correct with persistent errors,
we focus in this paper
on the design of efficient parallel computational geometry algorithms
using Boolean geometric primitives, 
such as orientation queries or sidedness tests, that randomly return
the wrong answer independently of previous queries with probability at most $p<1/2$ (where $p$ is known).


\paragraph{Our Results.}
In this paper, we give efficient parallel algorithms for constructing
convex hulls of $n$ points in 2D and 3D when primitive orientation
tests return the wrong answer with probability $p<1/2$. We give 
algorithms in the CREW PRAM model that have $\Theta(\log n)$
span and $\Theta(n\log n)$ work, w.h.p. The lower bounds follow from lower bounds on computing the minimum/maximum of a set of $n$ points~\cite{gupta1997optimal,feige1994computing}.

Our algorithms involve parallel divide-and-conquer and novel versions of the 
\emph{failure sweeping} technique~\cite{matias1991converting,goodrich2023optimal,feige1994computing,ghouse1991place},
which is potentially of independent interest. 
The failure sweeping technique is a paradigm for increasing the 
success probability of a divide-and-conquer parallel algorithm.
Unfortunately, when a success probability depends on the input size, the
success probability of recursive calls can be too low.
The solution, then, is to design an oracle that can test w.h.p. the solutions
to the recursive subproblems, and then ``sweep'' failed subproblems
into a memory space for which we can then apply additional parallel resources
to solve.
The trick is to show that the total 
size of all such failure subproblems is relatively small
so that we can then solve them using
a parallel algorithm that would otherwise have an inefficient work bound.
In this paper, we show how to adapt the failure sweeping technique to
the case of probabilistically noisy geometric primitives.

We believe that our
  techniques can be adapted to develop parallel algorithms
for other computational geometry problems, such as
  visibility, point-location, etc., in the noisy primitive
  setting, and more generally to develop parallel divide-and-conquer
  algorithms where errors are associated with both construction and
  validation steps.

\section{Preliminaries}
\paragraph{Noisy Geometric Primitives.}
Convex hull algorithms, as well as other geometric algorithms typically rely on primitive operations, such as orientation tests or visibility tests, that output a Boolean value and operate in constant time. In this work, we assume that each such primitive operation has a fixed probability $p < 1/2$ of being incorrect. We assume \emph{non-persistent} errors, meaning that each operation can be modeled as an independent weighted coin flip. 
As described above, Feige {\it et al.}~\cite{feige1994computing}, observed that we can convert an algorithm that assumes correct primitives into a noise-tolerate one by repeating each noisy primitive operation $O(\log n)$ times and choosing the majority response. Eppstein, Goodrich, and Sridhar~\cite{eppstein2025computational} call this the \emph{trivial repetition strategy}, and it immediately implies $O(\log^2 n)$-span, $O(n\log^2 n)$-work algorithms for convex hull construction w.h.p., as well as for a variety of other problems. Our goal is to judiciously use this strategy alongside other noise-tolerant algorithms and our novel failure sweeping techniques to improve this to optimal span and work. 

As Eppstein {\it et al.} note, manipulating non-geometric data, such as pointers and metadata, or manipulating already-processed data is not a noisy operation. For example, rebalancing a tree used as a point-location structure for a planar convex hull is not noisy. However, querying that same structure is a noisy operation. Another important non-noisy operation is the prefix sum. In our paper, we use prefix sums to combine array elements either arithmetically or via concatenation. As a result, no geometric primitives are used, and so they are non-noisy. 
\paragraph{Path-guided Pushdown Random Walks.}
The main technical contribution of Eppstein, Goodrich, and Sridhar is called \emph{path-guided pushdown random walks}. This technique is a generalization of the noisy binary search of~\cite{renyi} to apply to any DAG that obeys the following two constraints: (1) each possible query value must be associated with a unique path $P$ down the given DAG and (2) we can verify whether we are on a correct path using a constant number of noisy comparisons. 

As a simple example, consider a binary search tree. Constraint (1) is satisfied by properties of binary search. To satisfy constraint (2), we can maintain pointers to two ancestors $v_l$ and $v_r$ at each node $v$. They are respectively the largest ancestor of $v$ whose value is smaller than $v$'s and the smallest ancestor of $v$ whose value is larger than $v$'s (one of these will be $v$'s parent). The values of $v_l$ and $v_r$ bound all possible values that $v$ and its descendants can take, so in two comparisons we can determine if a query $q$ could be located in $v$'s subtree, i.e., if we are going down the correct path. 

In general, the authors call this second step devising a \emph{transition oracle} and do so for a variety of DAGs used in the construction of several geometric data structures. They show that, if such a transition oracle exists, then in $\Theta(|P| + \log(1/\varepsilon))$ steps we find our desired answer or determine a non-answer correctly with success probability $1 - \varepsilon$. If the height of our DAG is $O(\log n)$ and our confidence bound is w.h.p. in $n$, this does not increase search runtimes asymptotically. Technical details for path-guided pushdown random walks can be found in Appendix~\ref{sec:random-walks}. 
\section{Failure Sweeping with Noisy Primitives}\label{sec:failure-sweep-general}
Failure sweeping, formalized by Ghouse and Goodrich~\cite{ghouse1991place}, is a general technique used to improve confidence bounds on algorithms that either fail or exceed a desired span or work bound with some failure probability $p(n)$, given input size $n$. Say that our desired bounds for span and work done at a subproblem of size $n$ are $T(n)$ and $W(n)$. Our goal is for the entire algorithm to fail with probability at most $p(n)$, but when we recurse to sufficiently small subproblems of size $m$, e.g., $m = O(\log n)$, $p(m)$ may be exponentially larger than $p(n)$. If unchecked, errors in small subproblems propagate to the returned answer, meaning the algorithm can only guarantee success probability $1 - p(m)$, rather than the intended $1 - p(n)$. This is exactly the case in the noisy primitives setting. Algorithms resilient to noise run in span $T(m)$ if we wish for them to fail with probability at most $p(m)$. Attempting to have all subproblems fail with probability at most $p(n)$ would cause each subproblem to take span proportional to $T(n)$, leading to an inefficient algorithm. Failure sweeping gives us a way to improve this confidence bound from $1 - p(m)$ to $1 - p(n)$ without incurring this span blow-up.  

Let $n$ be the size of a parent subproblem and $m$ the size of its child subproblems. The general failure sweeping framework introduces two algorithms: a failure sweeping algorithm that can determine whether one of $n$'s subproblems of size $m$ is incorrect and a ``brute-force'' algorithm that can recompute an incorrect subproblem w.h.p. in $n$. We must sweep all of $n$'s subproblems at every level of recursion, so failure sweeping over all subproblems must not take more than $T(n)$ span and $W(n)$ work. On the other hand, our brute-force algorithm only applies to incorrect subproblems. Thus, brute-force recomputation of a single problem of size $m$ may exceed $T(m)$ and $W(m)$, though brute-force recomputation over all failed subproblems may not exceed $T(n)$ and $W(n)$ w.h.p. in $n$.

To complete the argument, one must show that not too many subproblems fail with probability $p(n)$. Specifically, if each subproblem of size $m$ fails with probability $p(m)$, we must show that no more than $q$ subproblems, determined in the analysis, fail with probability $1 - p(n)$. Then, we apply our brute-force algorithm to compute the $q$ subproblems in span $T(n)$ and total work $\leq W(n)$.
This way, all subproblems succeed with probability $1 - p(n)$. Applied inductively, this shows that we can ``upgrade'' our failure probabilities at each level of recursion without asymptotically increasing span or work. This ultimately allows the algorithm to achieve confidence bound $1 - p(n_0)$, where $n_0$ is the initial problem size.

One challenge in designing failure sweep algorithms is in ensuring that $m$ is not too small and not too large compared to $n$. If $m$ is too small, then many subproblems are likely to fail, forcing us to recompute too many subproblems. If $m$ is too large, then the cost to recompute even a few subproblems may exceed $W(n)$. Indeed, this second issue arises in our 3D convex hull algorithm. We introduce a novel technique called \emph{size reduction}, which is a preprocessing step that transforms large subproblems of size $m$ into $O(m)$ sufficiently small problems of total size $O(m)$ with probability $1 - p(n)$. We are then able to show that not too many of these smaller subproblems fail, guaranteeing that our total brute-force work is no more than $W(n)$. 
How small these subproblems must be depends both on $W(n)$, the brute-force algorithm, and $p(m)$.

In the case where 
$p(n)$ is a high-probability bound in $n$, 
size reduction may be applied to any algorithm 
wherein this issue arises as long as one can efficiently transform a large subproblem into $O(m^\alpha)$ sufficiently small subproblems of total size $O(m)$ with probability $1 - p(n)$ for any fixed constant $\alpha$.
Higher confidence guarantees may tolerate a larger fanout. 

\section{A Deterministic 2D Convex Hull Algorithm for CREW PRAM}\label{sec:our-2D-hull}

\paragraph{Outline of Our Algorithm.}
Our algorithm closely follows the $O(\log n)$-span, $O(n\log n)$-work algorithm of~\cite{aggarwal1988parallel}~and~\cite{goodrich1987efficient}, which we review in Appendix~\ref{sec:2dhull-original}. However, we swap operations subject to noise with noise-tolerant replacements. The initial sorting operation is replaced with the parallel noisy sorting of \cite{goodrich2023optimal}. 
Upper tangents are computed using the path-guided pushdown random walks technique of Eppstein, Goodrich, and Sridhar~\cite{eppstein2025computational}, which we describe in Appendix~\ref{sec:upper-tangent-oracle}. The max find operation used to compute the $V_j$'s and $W_j$'s for each upper hull is replaced with the noise-tolerant version described by Feige, Raghavan, Peleg, and Upfal~\cite{feige1994computing} for EREW PRAM. Note that the final parallel prefix operation is not a noisy operation because it only requires us to compare the indices 1 through $\sqrt n$ that we assigned each upper hull after sorting. Also, we note that the base case, when $n \leq 2$, can be computed without noisy comparisons as we simply link the two points with an edge. 

Each replacement matches the work and span of their non-noisy counterparts. 
However, they are only correct with high probability in the size of the current subproblem.
At small problem sizes, unchecked errors will compound such that we fail to produce an upper hull of all $n$ points with high probability in $n$. We apply failure sweeping
to detect and fix these errors. As described in Section~\ref{sec:failure-sweep-general}, we will develop a failure sweep and brute-force algorithm and use them to show that the entire algorithm succeeds with high probability in $n$ with optimal span and work. 

\paragraph{Failure Sweeping.}\label{sec:2D-failure-sweep}
In this section, we will use failure sweeping to show that we can certify and recompute $m$ subproblems of size $m$ in $O(\log(m^2))$ span and $O(m^2\log(m^2))$ work with high probability in $m^2$.
This way, we upgrade our failure probability to the size of the recursive step to which we return.
For now, we will describe individual noisy operations as failing with probability at most $1/m^c$. Later, we show what value $c$ must take. 

Consider a subproblem $S$ of size $m$. We will show how to detect whether $S$'s upper hull is valid. Some of the points in this subproblem belong to the returned upper hull and some do not (we can label each point with a corresponding status). Of the points that belong to the hull, they perform orientation tests on their neighbors using the trivial repetition strategy such that they fail with probability at most $m^{-2c}$. This verifies that the hull is convex. For the points not on the hull, we can apply noisy binary search
to determine the edge of the upper hull directly above or below them (recall that upper hulls are represented in memory as balanced binary search tree of their points sorted by $x$-coordinate). Adjusting constant factors will allow us to compute each search with failure probability at most $m^{-2c}$. A final orientation test can determine whether each non-hull point is below the upper hull---valid---or above the upper hull---invalid. This takes $O(\log(m^2))$ time using the trivial repetition test. In total, testing all $m$ points takes $O(\log(m^2))$ span and $O(m\log(m^2))$ work.
Every point reports whether it is correct, and we can use a prefix operation to determine whether $S$ is correct in an additional $O(\log m)$ span and $O(m)$ work. If $S$ is found to be incorrect, we use a brute-force algorithm to recompute it. See Figure~\ref{fig:failure-sweep-2dch} for an example of failure sweeping on some subproblem $S$.

\begin{figure}
    \centering
    \includegraphics[width=0.8\linewidth]{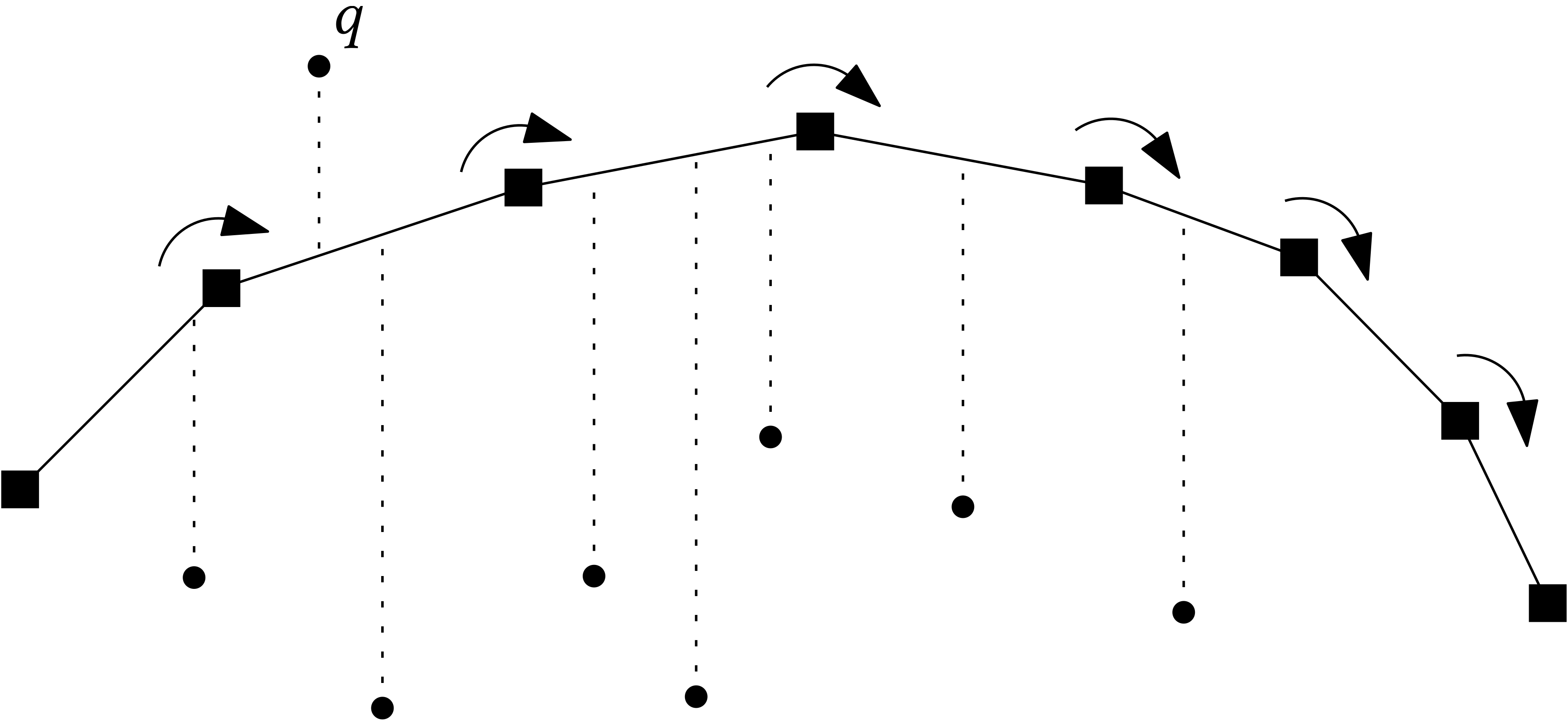}
    \caption{Here we have an example of failure sweeping on some upper hull of a subproblem $S$. To verify that each (square) hull point is correct, we use the trivial repetition strategy to determine that it is convex with respect to its left and right neighbors. We perform a noisy binary search on the hull to determine if any non-hull points are outside the upper hull. This is true for point $q$, so this subproblem is flagged for reconstruction.}
    \label{fig:failure-sweep-2dch}
\end{figure}

\paragraph{Brute-force Recomputation.}
We will show that we can recompute with brute-force a subproblem $S$ of size $m$ with high probability in $m^2$ using $O(\log(m^2))$ span and $O(m^{3/2}\log(m^2))$ work. 
We again begin by splitting $S$ into $\sqrt m$ groups $S_j$, each of size $O(\sqrt m)$. However, we will solve each $S_j$ directly, rather than through recursion. 

Let us assume for simplicity that $|S_j| > 6$. First, compute the topmost, leftmost, and rightmost points in each $S_j$, and call them $T_j$, $R_j$, and $L_j$ respectively. $L_j$ and $R_j$ can be recovered in constant time from the fact that the points have already been sorted by their $x$-coordinate.
$T_j$ can be determined using the noisy max-find of Feige {\it et al.}~\cite{feige1994computing} in $O(\sqrt m\log(m^2))$ work and $O(\log(m^2))$ span with failure probability at most $m^{-2c}$. These three points must be on the upper hull of $S_j$ as they are extreme points in their respective directions. For every other point $p \in S_j$, we find $p$'s clockwise neighbor, i.e., the point $r$ such that the angle between $\overline{pr}$ and a vertical line through $p$ is minimized (this is similar to the ``gift-wrapping'' process used in the Jarvis march convex hull algorithm~\cite{jarvis1973identification}). Using the noisy max-find of Feige {\it et al.}~\cite{feige1994computing} and adjusting constant factors, each max-find fails with probability at most $m^{-2c}$. We do a similar process to find $p$'s counterclockwise neighbor. Call $p$'s CCW neighbor $q$ and its CW neighbor $r$. 

We can then use $T_j, R_j, L_j, q, p,$ and $r$ to determine whether $p$ is on the upper hull of $S_j$. With a constant number of orientation tests, we can determine whether either sequence of vertices $[L_j, q, p, r, T_j, R_j]$ or $[T_j, q, p, r, R_j, L_j]$ is a convex chain. Using the trivial repetition strategy, this takes $O(\log(m^2))$ time to guarantee with failure probability at most $m^{-2c}$. See Figure~\ref{fig:brute-force-2dch} for a visual of the following lemma.

\begin{lemma}\label{lem:2D-extreme}
    If either $[L_j, q, p, r, T_j, R_j]$ or $[T_j, q, p, r, R_j, L_j]$ is a convex chain, then $p$ is a member of the upper hull of $S_j$. 
\end{lemma}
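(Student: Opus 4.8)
The plan is to reduce the entire statement to a single fact about the turn $q \to p \to r$. First I would normalize so that $p$ is the origin and assume general position (no two points of $S_j$ share an $x$-coordinate, and no three are collinear); ties and degeneracies can be removed by the usual symbolic perturbation. Since the brute-force algorithm handles $L_j$, $R_j$, and $T_j$ separately, we have $p \notin \{L_j,R_j,T_j\}$, so $S_j$ contains at least one point strictly left of $p$ (namely $L_j$) and at least one strictly right of $p$ (namely $R_j$); in particular $q$ and $r$ are well defined.

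Next I would record the defining property of the gift-wrapping neighbors. Because $q$ is the first point met when a vertical ray at $p$ is swept counterclockwise from straight up, $q$ lies strictly left of $p$ and every point of $S_j$ lying left of $p$ is on or below the line $\overline{pq}$; symmetrically $r$ lies strictly right of $p$ and every point of $S_j$ lying right of $p$ is on or below $\overline{pr}$. Writing $\mathrm{slope}(\overline{ps})$ for the slope of the segment from $p$ to a point $s$, these two properties say $\mathrm{slope}(\overline{pa}) \ge \mathrm{slope}(\overline{pq})$ for every left point $a$ and $\mathrm{slope}(\overline{pb}) \le \mathrm{slope}(\overline{pr})$ for every right point $b$, where the direction of each inequality is forced by the signs $x_a < 0 < x_b$.

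Then I would extract the one consequence of the hypothesis that I actually need. A convex chain turns consistently in one rotational direction; since the chain in question contains $L_j$, $T_j$, and $R_j$, and $T_j$ (the topmost point of $S_j$) lies weakly above the segment $\overline{L_j R_j}$, that common direction must be clockwise, in either of the two listed vertex orders. Hence the consecutive triple $q,p,r$ turns clockwise, which---given that $x_q < x_p < x_r$---is exactly the statement that $p$ lies on or above the segment $\overline{qr}$, i.e.\ $\mathrm{slope}(\overline{pq}) \ge \mathrm{slope}(\overline{pr})$.

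Finally I would close by contradiction. If $p$ were not on the upper hull of $S_j$, then $p$ would lie strictly below some upper-hull edge $\overline{ab}$ with $x_a < x_p < x_b$; clearing the positive denominator $x_b - x_a$ and dividing by $x_a x_b < 0$ shows this is equivalent to $\mathrm{slope}(\overline{pa}) < \mathrm{slope}(\overline{pb})$. But $a$ is a left point and $b$ a right point, so chaining the inequalities above gives $\mathrm{slope}(\overline{pa}) \ge \mathrm{slope}(\overline{pq}) \ge \mathrm{slope}(\overline{pr}) \ge \mathrm{slope}(\overline{pb})$---a contradiction. The case where $[T_j,q,p,r,R_j,L_j]$ is the convex chain is handled identically, since that sequence also contains $q,p,r$ consecutively and also contains $L_j$, $T_j$, $R_j$. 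The step I expect to need the most care is the third one: arguing that being a ``convex chain'' forces the turn at $p$ to be clockwise rather than counterclockwise, since neither listed sequence is obviously $x$-monotone (e.g.\ $r$ could a priori lie to the right of $T_j$). This is the only place where $L_j$, $T_j$, and $R_j$ are genuinely used; everything else is a two-line slope computation involving only $q$, $p$, and $r$.
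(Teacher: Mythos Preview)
Your argument is correct, and your flagged step~3 goes through: for a convex chain all interior turns share one orientation, and any subsequence of a convex chain is again convex with the same orientation; since $L_j,T_j,R_j$ (respectively $T_j,R_j,L_j$) is a cyclic listing of a clockwise triangle (leftmost, topmost-above, rightmost), the whole chain---and in particular the consecutive triple $q,p,r$---turns clockwise. Your observation that $p\notin\{L_j,R_j\}$ forces $q$ to lie strictly left and $r$ strictly right of $p$ is exactly what makes the slope formulation clean.

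The paper's own proof uses the same two ingredients---the empty gift-wrapping wedges on either side of the vertical through $p$, and the right turn at $q,p,r$---but packages them differently. Instead of comparing slopes, it observes that the two empty wedges together cover an open half-plane bounded by the lines $\overrightarrow{pq}$ and $\overrightarrow{pr}$, then rotates the coordinate system so that $p$ becomes the topmost of the six anchor points; the empty half-plane then witnesses that $p$ is topmost over all of $S_j$, hence on the convex hull, and a final orientation check against $L_j,T_j,R_j$ places $p$ on the upper rather than the lower hull. Your route is more computational and avoids the rotation and the separate upper-versus-lower argument, landing directly on the upper hull via the slope chain $\mathrm{slope}(\overline{pa})\ge\mathrm{slope}(\overline{pq})\ge\mathrm{slope}(\overline{pr})\ge\mathrm{slope}(\overline{pb})$; the paper's route is more pictorial and makes the extremality of $p$ explicit before localizing it. Either is fine, and your version is arguably tighter about why the chain must be clockwise---the paper simply asserts the right turn at $p$ without isolating the role of $L_j,T_j,R_j$.
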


\begin{proof}
See Appendix~\ref{sec:2D-supp-2}.
\end{proof}

To complete the upper hull for $S_j$, we simply perform a parallel prefix operation on the points, ordered by their $x$-coordinate. Points that were found to not be on the hull contribute nothing to the prefix operation. Recall that $S_j$ is just one of $\sqrt m$ upper hulls in the subproblem $S$ that we wished to recompute. To compute the upper hull over all points in $S$, we simply use the original recombination method of~\cite{goodrich1987efficient,aggarwal1988parallel}, with the corresponding noise-tolerant upper tangent and max-find algorithms described above. Again, parameters can be set such that each of those operations fails with probability no more than $m^{-2c}$ with only a constant factor increase in runtime. 

It is clear that computing upper hulls for all $S_j$ takes $O(\log(m^2))$ span, as each point $p\in S_j$ can perform its own computations independently in $O(\log(m^2))$ time and all $S_j$ can be processed independently of each other. The work to compute the upper hull of $S_j$ associated with $p$ is performing two max-finds on $O(\sqrt m)$ points to retrieve $q$ and $r$ plus the work to perform the orientation tests to verify whether $p$ is a hull point or not. This takes $O(\sqrt m \times O(\log(m^2)))$ work. Then over all $m$ points, total work is $O(m^{3/2} \log(m^2))$. A constant number of noisy operations are performed per point, so this takes $O(m)$ noisy operations in total.

The remaining operations, a max-find to find $T_j$ for each $S_j$, a parallel prefix operation to create the upper hull for $S_j$, and the recombination step for each $S_j$ all take $O(\log(m^2))$ span and $O(m\log(m^2))$ work in total, using $O(m)$ noisy operations.

\begin{lemma}\label{lem:2D-failure-sweep}
    We can failure-sweep an upper hull of $m$ points in  $O(\log(m^2))$ span and $O(m\log(m^2))$ work using $O(m)$ noisy operations. 
We can compute an upper hull in $O(\log(m^2))$ span and $O(m^{3/2}\log(m^2))$ work using $O(m)$ noisy operations. Both hold w.h.p.~in $m^2$.
\end{lemma}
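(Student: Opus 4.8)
\emph{Proof plan.} This is essentially an accounting lemma: both bounds follow by assembling the two procedures described above and then closing with a single union bound to get the stated confidence. Before the accounting, I would fix the parameters of every noise-tolerant subroutine either procedure invokes --- a trivial-repetition orientation test, a noisy binary search on the height-$O(\log m)$ search tree representing a hull, and a noisy max-find on a set of $O(\sqrt m)$ points --- so that it errs with probability at most $m^{-2c}$. By the analysis of the trivial repetition strategy, of path-guided pushdown random walks, and of the noisy max-find of Feige \emph{et al.}~\cite{feige1994computing}, raising the target confidence from $m^{-c}$ to $m^{-2c}$ only adds $O(\log m)$ repetitions, or $O(\log m)$ to a walk length, so each subroutine's span and work grow by at most a constant factor over its w.h.p.-in-$m$ cost; in particular each is still $O(\log(m^2))$ span, and a size-$\Theta(\sqrt m)$ max-find is still $O(\sqrt m\log(m^2))$ work. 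Throughout I would also keep a running count of subroutine invocations, since that is what the final union bound ranges over.

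For the failure-sweeping bound, I would first record the standard characterization that a hull reported on a point set $S$ is the true upper hull of $S$ if and only if (i) its reported vertices, taken in $x$-order, form a convex chain and (ii) every non-vertex point of $S$ lies on or below that chain and within its $x$-range, so that the described test is \emph{complete}: a hull passing all checks must be correct. The procedure checks (i) by having each reported vertex run one trivial-repetition orientation test on its two neighbors, and (ii) by having each non-vertex point noisy-binary-search for the hull edge spanning its $x$-coordinate (an $x$-coordinate outside the span is flagged by the search as an error, consistent with (i)) and then run one trivial-repetition orientation test against that edge. Each point thus makes $O(1)$ subroutine calls, in parallel and each of depth $O(\log(m^2))$; an $O(\log m)$-depth prefix sum over the per-point verdicts decides whether $S$ is correct. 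This gives $O(\log(m^2))$ span, $O(m\log(m^2))$ work, and $O(m)$ invocations, with the case $|S|\le 2$ handled without geometric primitives.

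For the brute-force bound, I would walk through the recomputation step by step: splitting $S$ into the $\sqrt m$ groups $S_j$ is free from the sorted order (a prefix sum); $L_j,R_j$ come for free and $T_j$ from one noisy max-find on $O(\sqrt m)$ points; then, for each remaining $p\in S_j$ in parallel, two noisy max-finds on $O(\sqrt m)$ points recover $p$'s CW and CCW neighbors $r,q$, after which $O(1)$ trivial-repetition orientation tests evaluate the two candidate chains so that Lemma~\ref{lem:2D-extreme} decides whether $p$ lies on the upper hull of $S_j$; a prefix sum per group splices the survivors into a chain; and finally the recombination of~\cite{aggarwal1988parallel,goodrich1987efficient}, with noise-tolerant tangent finding (path-guided random walks) and max-find, merges the $\sqrt m$ group-hulls. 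Each point makes only $O(1)$ subroutine calls, but two are max-finds over $\Theta(\sqrt m)$ elements, so per-point work is $O(\sqrt m\log(m^2))$ and the total is $O(m^{3/2}\log(m^2))$; the $T_j$'s, the prefix sums, and the recombination contribute only $O(m\log(m^2))$ work, and all steps run in $O(\log(m^2))$ span since every point works independently on an input of size $O(\sqrt m)$ or on a height-$O(\log m)$ tree. The invocation count is again $O(m)$.

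Finally, in each procedure there are $O(m)$ subroutine invocations, each erring with probability at most $m^{-2c}$, so by a union bound the whole procedure behaves as intended --- the sweep accepts exactly the correct hulls, and the brute force returns the true upper hull of $S$ --- with probability at least $1-O(m^{1-2c})$; taking $c$ a sufficiently large constant makes this $1-p(m^2)$, i.e., w.h.p.\ in $m^2$ (this is where the constant $c$ promised earlier is fixed; when the sweep is instead applied to all $m$ children of a size-$m^2$ parent the invocation count becomes $O(m^2)$ and one simply takes $c$ slightly larger). When all subroutines succeed, correctness follows from completeness of the sweep test and, for the brute force, from Lemma~\ref{lem:2D-extreme} together with correctness of the recombination of~\cite{aggarwal1988parallel,goodrich1987efficient}. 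I expect the only non-routine parts to be the completeness claim --- that a reported hull passing every check is the true upper hull --- and pinning down $c$ so that the $O(m)$-way union bound lands below $p(m^2)$; everything else is resource bookkeeping.
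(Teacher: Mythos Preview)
Your proposal is correct and follows essentially the same approach as the paper: the lemma is stated there as a summary of the preceding ``Failure Sweeping'' and ``Brute-force Recomputation'' paragraphs, and your plan reconstructs that discussion faithfully --- per-vertex orientation tests plus per-non-vertex noisy binary search for the sweep, the $\sqrt m$-group decomposition with two noisy max-finds per point and Lemma~\ref{lem:2D-extreme} for the brute force, and an $O(m)$-way union bound with the constant $c$ deferred. Your explicit completeness claim for the sweep test and the closing union-bound calculation are details the paper leaves implicit (it fixes $c$ only later, in the proof of Theorem~\ref{thm:2dc}), but they are exactly what is needed and do not change the argument.
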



\paragraph{High-Probability Bounds for Our Convex Hull Algorithm.}
\label{sec:2D-inductive}
Our subproblem $S$ of size $m$ returns to a subproblem $S'$ of size $m^2$. To maintain the same work recurrence as the original algorithm, subproblem $S'$ must limit its work to $O(m^2\log(m^2))$. We have the ``budget'' to sweep each subproblem $S$. However, since recomputing a single subproblem takes $O(m^{3/2}\log(m^2))$ work, recomputing all $m$ subproblems of size $m$ would take $O(m^{5/2}\log(m^2))$ work. 

With some arithmetic, we can bound the number of subproblems $S$ that can fail such the we can afford to recompute them all by brute-force. 

\begin{cor}\label{lem:2D-bound-fail}
    During the recombination step of subproblem $S'$, at most $O(m^{1/3})$ subproblems can fail such that we take $O(m^2\log(m^2))$ work to recompute them. 
\end{cor}
\begin{proof}
See Appendix~\ref{sec:2D-supp-2}.
\end{proof}

Applying our failure-sweeping and brute-force algorithms inductively gives us the following.

\begin{theorem}
    We can compute the convex hull of $n$ points in the plane in $O(\log n)$ span and $O(n\log n)$ work w.h.p. in $n$, even with noisy primitive operations.
\label{thm:2dc}
\end{theorem}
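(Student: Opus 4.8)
My plan is to prove Theorem~\ref{thm:2dc} by induction over the levels of the modified Aggarwal--Goodrich recursion, applying Lemma~\ref{lem:2D-failure-sweep} and Corollary~\ref{lem:2D-bound-fail} to carry out the failure-sweeping ``upgrade'' of Section~\ref{sec:failure-sweep-general} at every level. It suffices to build the upper hull, since the lower hull is symmetric and the convex hull is their concatenation (and the $n\le 2$ case is noiseless). I would fix constants so that, inside a subproblem of size $N$ whose children have size $m=\sqrt N$, every noisy stage is repeated enough to fail with probability at most $m^{-2c}=N^{-c}$, where $c$ is a sufficiently large constant. The inductive claim is that a subproblem of size $N$ returns a correct upper hull with probability at least $1-N^{-c''}$ in $O(\log N)$ span and $O(N\log N)$ work, for a constant $c''$ with $c$ chosen large relative to $c''$. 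The recursion bottoms out at subproblems of a fixed constant size $b_0$, solved directly by the trivial repetition strategy in $O(1)$ span and work, which succeeds with probability at least $1-b_0^{-c''}$ for $b_0$ a large enough constant.

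For the inductive step on a subproblem $S'$ of size $N$: a noisy parallel sort puts the points in $x$-order in $O(\log N)$ span and $O(N\log N)$ work, correct w.h.p.\ in $N$; $S'$ splits into $m=\sqrt N$ subproblems of size $m$, which the inductive hypothesis solves in $O(\log m)$ span and $\sum_j O(m\log m)=O(N\log N)$ total work. We failure-sweep all $m$ children (Lemma~\ref{lem:2D-failure-sweep}) in $O(\log m^2)=O(\log N)$ span and $m\cdot O(m\log m^2)=O(N\log N)$ work; since the sweep errs on any one child with probability at most $N^{-c}$, a union bound over the children leaves probability at most $\sqrt N\cdot N^{-c}$ that some broken child hull escapes detection. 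By Corollary~\ref{lem:2D-bound-fail} at most $O(m^{1/3})=O(N^{1/6})$ children are flagged w.h.p.\ in $N$, so brute-force recomputing them (again Lemma~\ref{lem:2D-failure-sweep}) costs $O(\log N)$ span and $O(N^{1/6})\cdot O(m^{3/2}\log m^2)=O(N^{11/12}\log N)=o(N\log N)$ work, each succeeding w.h.p.\ in $N$, after which every child hull is correct w.h.p.\ in $N$. Finally we recombine the $m$ upper hulls using the noise-tolerant upper-tangent (path-guided pushdown random walk) and max-find routines exactly as in \cite{aggarwal1988parallel,goodrich1987efficient}, in $O(\log N)$ span and $O(N\log N)$ work, correct w.h.p.\ in $N$. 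The span obeys $T(N)=T(\sqrt N)+O(\log N)$ and the work obeys $W(N)=\sqrt N\,W(\sqrt N)+O(N\log N)$; since $\log\sqrt N=\tfrac12\log N$, both telescope over the $O(\log\log n)$ levels of recursion to $O(\log n)$ span and $O(n\log n)$ work. Summing the failure probabilities of the constantly many noisy stages at this node---the sort, the $\sqrt N$ child sweeps, the event of Corollary~\ref{lem:2D-bound-fail}, the $O(N^{1/6})$ recomputations, and the recombination---bounds the node's failure probability by $N^{1-c}+\sqrt N\cdot N^{-c}+N^{-\Omega(1)}+O(N^{1/6-c})\le N^{-c''}$ once $c$ is large enough, so the induction closes and $N=n$ gives the theorem.

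The main obstacle, and the reason failure sweeping is needed at all, is that the recursive calls are \emph{not} individually reliable at the parent scale: by the inductive hypothesis a child fails only with probability at most $(\sqrt N)^{-c''}$, so a naive union bound over children would contribute a term $\sqrt N\cdot(\sqrt N)^{-c''}$ that does not shrink with $N$. The key to the analysis above is that a node's correctness is actually jeopardized only when a \emph{sweep} misclassifies a broken child, an event of probability at most $N^{-c}$ (not merely $(\sqrt N)^{-c''}$) precisely because the sweep's primitives are driven down to that probability, so the union bound over children does close. The remaining care is global rather than local: one fixed constant $c$ must make this work simultaneously at every scale, including the $\Theta(n)$ constant-size leaf subproblems, so that the union bound over all subproblems at all $O(\log\log n)$ levels stays $n^{-\Omega(1)}$; and one should note that, unlike the 3D algorithm, no size-reduction preprocessing is required here, since brute-force reconstruction of a size-$m$ subproblem already costs only $O(m^{3/2}\log m^2)$ work, which is comfortably within the per-level work budget.
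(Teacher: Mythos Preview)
Your argument is essentially the paper's: induct on recursion depth, sweep every child via Lemma~\ref{lem:2D-failure-sweep}, brute-force the flagged ones within the budget of Corollary~\ref{lem:2D-bound-fail}, recombine with the noise-tolerant tangent and max-find routines, and close with the same recurrences $T(N)=T(\sqrt N)+O(\log N)$ and $W(N)=\sqrt N\,W(\sqrt N)+O(N\log N)$.

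One genuine gap: you invoke Corollary~\ref{lem:2D-bound-fail} as if it asserts that at most $O(m^{1/3})$ children are flagged w.h.p.\ in $N$, but it says no such thing---it only says that $O(m^{1/3})$ is the largest number of failed children the $O(N\log N)$ work budget can absorb. You must still \emph{prove} that the number of failed children stays below this threshold with probability $1-N^{-\Omega(1)}$. The paper fills this via Markov's inequality: with $m$ children each failing (by the inductive hypothesis) with probability at most $m^{-c''}$, one gets $\Pr[X\ge m^{1/3}]\le m^{1-c''}/m^{1/3}=m^{2/3-c''}$, and forcing this to be polynomially small in $N=m^2$ yields the concrete constraint $c''>8/3$. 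This is exactly the ``$N^{-\Omega(1)}$'' term you left undetermined, and it is what pins down the constant you said would be chosen ``large enough.'' Two cosmetic points: the sort is done once at the outset, not inside every subproblem; and the paper's base case is $n\le 2$, which needs no noisy primitive at all, so the constant-$b_0$ base case with trivial repetition is unnecessary.
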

\begin{proof}
See Appendix~\ref{sec:2D-supp}.
\end{proof}

\section{A Randomized 3D Convex Hull Algorithm for CREW PRAM}

See Appendix~\ref{sec:3dhull-original} for a review of the non-noisy algorithm of Reif and Sen~\cite{reif19923DHull}. 
As before, there are three steps to our analysis. We first show how to adapt all operations of the original algorithm such that they are noise-tolerant w.r.t. their input size. Then we show how to efficiently perform failure-sweeping and brute-force reconstruction. The first two steps are intricate but straightforward, primarily using old techniques. We defer details to Appendix~\ref{sec:3D-supp} and, in particular, prove the following.
\begin{lemma}\label{lem:3D-brute-force}
    The intersection of a set of $n$ halfspaces can be computed in $O(\log n)$ span and $O(n^4\log n)$ work with high probability in $n$ under noisy comparisons.
\end{lemma}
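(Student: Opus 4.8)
The plan is to compute the intersection of $n$ halfspaces via a brute-force divide-and-conquer that mirrors the $O(\log n)$-span 2D hull recombination, but applied to the dual polytope. First I would dualize: each halfspace in $\mathbb{R}^3$ corresponds to a point, and the intersection of the $n$ halfspaces corresponds (after a suitable projective transformation ensuring a point in the common interior maps to the origin) to the convex hull of the $n$ dual points. So it suffices to compute a 3D convex hull of $n$ points with $O(\log n)$ span and $O(n^4\log n)$ work w.h.p.\ in $n$. Since we are allowed such a large work budget, the approach is deliberately wasteful: split the point set into $\sqrt n$ groups of size $\sqrt n$, recursively compute each group's hull, and then merge all $\sqrt n$ hulls. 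The recursion has $O(\log\log n)$ levels, so if each level costs $O(\log n)$ span and a merge of $k$ polytopes of total size $N$ costs $O(\log N)$ span and $O(N^4\log N)$ work, the totals telescope to $O(\log n)$ span and $O(n^4\log n)$ work.

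The key step is the merge of $\sqrt n$ disjoint 3D convex polytopes into one. For this I would use the classical brute-force CREW merge: for each of the $O(k^2)$ pairs of subhulls, and for each pair of faces (one from each subhull), test in parallel whether the supporting planes contribute to the merged hull; more directly, for each candidate face determined by a triple of input points, verify with $O(N)$ orientation tests (each via the trivial repetition strategy at failure probability $n^{-\Theta(1)}$, so w.h.p.\ in $n$) whether all remaining points lie on one side. There are $O(N^3)$ candidate triples, each costing $O(N)$ orientation tests of $O(\log N)$ time, giving $O(N^4\log N)$ work and $O(\log N)$ span. A prefix/compaction operation (non-noisy) then assembles the surviving faces into the output polytope's face lattice. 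Because every noisy comparison is boosted to failure probability polynomially small in $n$ (not merely in $N$), a union bound over all $O(n^4)$ noisy operations across all levels still gives success w.h.p.\ in $n$; this is exactly why the work bound carries an extra polynomial factor relative to a ``correct-primitive'' algorithm.

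The main obstacle I anticipate is handling the geometric degeneracies and the bookkeeping of the polytope face structure in a way that keeps the span at $O(\log N)$ without hidden noisy operations. Specifically: (i) the projective transformation that makes dualization valid must be computed from a point strictly interior to all halfspaces, which itself requires locating such a point --- but this can be folded into the construction or assumed given as part of a well-posed instance; (ii) identifying which triples of points form actual faces (rather than just separating planes), and stitching adjacent faces into edges and vertices, must be done with prefix sums and pointer manipulation only, since those are non-noisy; and (iii) ensuring the recursion depth really is $O(\log\log n)$ and that the per-level work $\sum_j |H_j|^4$ where $\sum_j |H_j| \le N$ is dominated by $N^4$ (true, since $\sum a_j^4 \le (\sum a_j)^4$). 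I would also note that since this lemma is used only as the brute-force subroutine inside the 3D failure-sweeping argument, a loose $O(n^4\log n)$ suffices and no cleverness beyond ``test every candidate face against every point'' is required; the content is simply verifying that this crude approach respects the $O(\log n)$ span constraint under noise.
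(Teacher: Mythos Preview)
Your core brute-force idea---test every candidate facet (triple of points) against every remaining point via trivially-repeated orientation tests---is correct and is essentially the dual of what the paper does. But the divide-and-conquer wrapper you place around it is both superfluous and breaks the span bound. Observe that your merge step does not use the recursively computed subhulls at all: ``for each candidate face determined by a triple of input points, verify with $O(N)$ orientation tests whether all remaining points lie on one side'' is simply computing the hull of all $N$ points from scratch, ignoring the recursive output. You may as well run this once at the top level and stop. Worse, your span accounting is inconsistent: you boost every primitive to failure probability $n^{-\Theta(1)}$, so each trivially-repeated test costs $\Theta(\log n)$ sequential time regardless of the current subproblem size $N$; hence a level cannot cost only $O(\log N)$ span when $N\ll n$, and $\Theta(\log\log n)$ levels at $\Theta(\log n)$ span each do not telescope to $O(\log n)$.

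The paper's argument is a single flat pass with no recursion and no dualization. Enumerate the $O(n^3)$ candidate vertices of the arrangement (each the intersection of a triple of halfspace boundaries); for each of the $O(n^4)$ vertex--halfspace pairs, test containment with one trivially-repeated primitive, all pairs in parallel ($O(\log n)$ span, $O(n^4\log n)$ work); a non-noisy parallel prefix per candidate vertex then decides which survive. The surviving vertices are exactly those bounding the intersection, and one application of noisy sorting recovers the polytope's adjacency structure (Reif--Sen, Lemma~2.2). A union bound over the $O(n^4)$ boosted primitives yields the high-probability guarantee. Stripped of the recursion and the dualization, your merge step is this same argument; the fix is simply to discard the divide-and-conquer scaffolding and run the brute force once.
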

\begin{theorem}\label{lem:failure-sweep-3D}
    Any parent subproblem of size $M$ can failure sweep all of its child subproblems in $O(M\log M)$ work and $O(\log M)$ span w.h.p. in $M$. This uses $O(M)$ noisy operations. 
\end{theorem}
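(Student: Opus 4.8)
The plan is to lift the 2D failure-sweep of Lemma~\ref{lem:2D-failure-sweep} to three dimensions: for each child we certify that its output polytope is convex by a local (dihedral-angle) check at every edge and vertex, and we certify \emph{completeness} by verifying, via the point-location structure the algorithm already maintains, that every input point of the child lies inside that polytope. Fix a parent subproblem of size $M$ whose children $S_1,\dots,S_k$ are point sets collectively involving $O(M)$ points (guaranteed by the sampling step of the Reif--Sen algorithm~\cite{reif19923DHull}), and let $H_i$ be the convex polytope that child $i$ claims to have computed for $\mathrm{conv}(S_i)$. The correctness of the sweep rests on the elementary fact that $H_i=\mathrm{conv}(S_i)$ iff (a)~$H_i$ is convex and (b)~every point of $S_i$ lies in $H_i$: the forward direction is immediate, and for the converse, (a) and (b) give $\mathrm{conv}(S_i)\subseteq H_i$, while $H_i\subseteq\mathrm{conv}(S_i)$ holds because the algorithm only ever promotes points of $S_i$ to be vertices of $H_i$.

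To check (a), in parallel over all children, every edge and vertex of every $H_i$ performs a constant number of orientation/sidedness tests deciding whether its dihedral angle is convex (resp.\ whether its incident facets form a convex cone), each resolved by the trivial repetition strategy tuned to err with probability at most $M^{-c}$. Since $H_i$ is a planar graph with $O(|S_i|)$ edges and vertices, this costs $O(M)$ noisy operations, $O(M\log M)$ work, and $O(\log M)$ span; by the classical fact that a closed, connected, locally convex set is convex (Tietze's theorem), passing all these checks certifies every $H_i$ is convex, w.h.p.\ in $M$. To check (b), in parallel over the $O(M)$ points, each $q\in S_i$ runs a path-guided pushdown random walk in $H_i$'s point-location structure (height $O(\log M)$, transition oracle as in Appendices~\ref{sec:upper-tangent-oracle} and~\ref{sec:random-walks}), tuned to err with probability at most $M^{-c}$, to find the facets of $H_i$ directly above and below $q$'s projection (or to report that $q$ projects outside $H_i$, in which case we flag $S_i$); a constant number of trivial-repetition sidedness tests then checks that $q$ lies between those facets. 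This is $O(1)$ noisy operations per point, hence $O(M)$ in total, with $O(M\log M)$ work and $O(\log M)$ span. Finally, a non-noisy parallel prefix over the per-edge, per-vertex, and per-point Booleans reports, for each $i$, whether $S_i$ passed, in $O(M)$ work and $O(\log M)$ span, and flagged children are handed to the brute-force routine of Lemma~\ref{lem:3D-brute-force}. A union bound over the $O(M)$ tuned operations (each erring with probability at most $M^{-c}$) shows that with probability $1-O(M^{-c+1})$ every check returns the truth and the flagged set is exactly $\{\,i:H_i\neq\mathrm{conv}(S_i)\,\}$; taking $c$ large enough makes this hold w.h.p.\ in $M$, and the claimed resource bounds follow by summing the three phases.

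I expect the main obstacle to be making the completeness check genuinely robust against a badly corrupted child. Two things must be pinned down: the geometric claim that if $H_i$ is convex with vertices in $S_i$ but $H_i\neq\mathrm{conv}(S_i)$, then some point of $S_i$ lies strictly outside $H_i$ (true, but needing care around degenerate configurations and --- if the 3D algorithm operates on the halfspace-intersection dual of Lemma~\ref{lem:3D-brute-force} --- around unbounded polytopes), and the operational claim that the noisy point-location walk on a possibly-inconsistent subdivision either exposes such a witness correctly or cleanly reports ``outside,'' so that a wrong subproblem is never silently accepted. A secondary technicality is justifying that the combinatorial facet/adjacency structure a child hands up, although derived from noisy tests, is a valid polyhedral complex (or else the convexity and containment checks detect the inconsistency), and that the children's point sets sum to $O(M)$, so that the noisy-operation count is truly $O(M)$ and not $O(M\log M)$.
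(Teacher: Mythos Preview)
Your high-level strategy matches the paper's: certify convexity by local orientation tests, then certify containment by point-locating every input point against the child's polytope (projected to a plane, then an orientation test). The difference is in where the point-location structure comes from, and this is a genuine gap in your proposal.

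You assume each child ``already maintains'' a point-location structure of height $O(\log M)$ that a path-guided random walk can traverse. It does not. The only point-location structure the Reif--Sen algorithm builds is the Dobkin--Lipton structure over the \emph{sample} $R$ (size $\le M^{1/8}$), used to route halfspaces to cones; nothing of the sort is built for the child's output polytope $H_i$. The paper notes this explicitly and spends essentially all of the technical effort behind this theorem constructing such a structure from scratch, under noise, within the stated budget: it splits $H_i$ into an upper and lower hull, projects each to a plane to obtain a triangulated convex polygon, and builds a Kirkpatrick decomposition of that triangulation. Two nontrivial wrinkles arise. First, the $O(\log m)$ rounds of independent-set removal cannot each afford an $O(\log M)$ trivial-repetition factor, so the paper uses a two-pass scheme: build the decomposition with noise removed by a single preparatory sort, then in a second pass compute all the DAG overlap edges in parallel with trivial repetition. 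Second, Reif and Sen's parallel construction terminates with a top level of size $O(\log m)$, and brute-force searching that level costs an extra log factor under noise; the paper plugs this with a separate $O(r^3\log M)$-work slab structure on those $r=O(\log m)$ triangles. Only after all this does the path-guided random walk you invoke actually have a DAG to walk on. Without supplying (or citing) such a construction, your completeness check has no implementation.

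A minor secondary point: your convexity check via dihedral angles and Tietze's theorem is more elaborate than needed. Under general position every vertex of $H_i$ has degree three, so a single 3D orientation test per vertex (with trivial repetition) suffices, and this also avoids having to argue separately that a corrupted $H_i$ is a valid polyhedral complex before Tietze applies.
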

Here, we focus on brute-force construction, for which we  apply our generalization of the failure sweep paradigm.

\paragraph{Efficient Brute-Force Reconstruction via Size Reduction.}
Attempting to apply the strategy that bounds the number of failed subproblems that we used for 2D convex hulls presents a problem. Recall from Reif and Sen's original work~\cite{reif19923DHull} as well as Clarkson~\cite{clarkson1988applications}, if the problem at recursive level $l-1$ is size $M$, then its child problems at recursive level $l$ could be size at most $O(M^{7/8}\log M)$. Indeed, if we were to apply Lemma~\ref{lem:3D-brute-force} to solve just a single large subproblem by brute-force, it would take $O(M^{7/2}\log M)$ work to solve it w.h.p. in $M$. In this section, we develop a novel technique we call \emph{size reduction} to decompose these larger subproblems into problems that are small enough to recompute if needed but are large enough to have similar failure probabilities.



To illustrate our approach to size reduction, consider the following algorithm to compute a halfspace intersection of $m$ halfspaces w.h.p. in $M$. We can perform the original algorithm with the following two modifications. (1) Each recursive level must succeed with high probability in $M$, so no matter what the current problem size is we tune our noisy operations and sampling probabilities to fail with probability $\leq 1/M$. (2) Irrespective of the current problem size, always choose a sample of size $O(m^{1/16})$. 

By Reif and Sen's original work~\cite{reif19923DHull} as well as an analysis of Clarkson~\cite{clarkson1988applications}, 
each subproblem of size $c$ now takes $O(\log M)$ span and $O(\sqrt m\log M + c\log M)$ work to preprocess it into recursive subproblems.
However, because our samples are larger, we reduce the maximum subproblem size by roughly $O(m^{1/16})$ at each level, 
meaning that it takes at most fifteen levels of recursion such that the largest problem size is $O(m^{1/8})$. When we get to this level, we simply apply Lemma~\ref{lem:3D-brute-force} to solve them w.h.p. in $M$ by brute-force. 

There can be at most $O(m)$ such subproblems 
so, in total, 
the cost of preprocessing and the cost of the final brute-force computation is
at most $O(\log M)$ span and $O(m^{3/2}\log M)$ work. 
All operations are done w.h.p. in $M$, so failure sweeping is unnecessary. We simply recombine them as we return from the fifteen levels of recursion in an additional $O(\log M)$ span and $O(m\log M)$ work. 

\begin{lemma}\label{lem:3D-medium-brute-force}
    The intersection of a set of $m$ halfspaces can be computed in $O(m^{3/2}\log M)$ work and $O(\log M)$ span with high probability in $M$. 
\end{lemma}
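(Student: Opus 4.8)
The plan is to make rigorous the algorithm and analysis sketched in the paragraph immediately preceding the statement. We run Reif and Sen's recursive halfspace-intersection algorithm~\cite{reif19923DHull} with two modifications: (i) every recursive level is made correct with high probability \emph{in $M$} rather than in the current subproblem size, by repeating each noisy sidedness/orientation primitive $\Theta(\log M)$ times and taking the majority, and by choosing the constants in each random-sampling step of the level so that it succeeds with probability at least $1-1/M^{c}$ for a fixed constant $c$ chosen in advance; and (ii) at \emph{every} level, regardless of the size $n$ of the current subproblem, the random sample has size $r=\Theta(m^{1/8})$, where $m$ is the original number of input halfspaces (once $n\le r$ we stop recursing and invoke Lemma~\ref{lem:3D-brute-force} directly).

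First I would pin down the per-level cost. A sample of size $r=\Theta(m^{1/8})$ yields a sampled hull with $O(r)=O(m^{1/8})$ faces, each inducing one subproblem; constructing that hull and routing halfspaces into subproblems uses the noise-tolerant primitives of Appendix~\ref{sec:3D-supp} together with (non-noisy) prefix sums, and, exactly as in the Reif--Sen / Clarkson--Shor analysis~\cite{reif19923DHull,clarkson1988applications} (resampling whenever the subproblems' total size exceeds its target), the subproblems at a level have total size $O(m)$ with probability $\ge 1-1/M^{c}$. Hence each level costs $O(\log M)$ span, $O(m\log M)$ work, and $O(m)$ noisy operations; the recombination performed while unwinding the recursion (upper-tangent computations, max-finds, prefix sums, all via our noise-tolerant subroutines) costs the same per level.

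Second I would control the subproblem sizes and apply Lemma~\ref{lem:3D-brute-force} at the bottom. By Clarkson's sampling bound~\cite{clarkson1988applications}, a subproblem of size $n$ sampled with $r=\Theta(m^{1/8})$ spawns subproblems each of size $O\big((n/r)\log r\big)=O\big(n\,m^{-1/8}\log m\big)$ with probability $\ge 1-1/M^{c}$, so, iterating from $n=m$, after $t$ levels the largest surviving subproblem has size $O\big(m^{1-t/8}(\log m)^{t}\big)$; since $t\le 8$ is constant and $1-7/8<1/8$ leaves polynomial slack, after at most eight levels the largest subproblem has size $m_s=O(m^{1/8})$. There are then at most $O\big((m^{1/8})^{8}\big)=O(m)$ such subproblems, and solving one of them by Lemma~\ref{lem:3D-brute-force}, tuned to be correct w.h.p.\ in $M$, costs $O(m_s^{4}\log M)=O(m^{1/2}\log M)$ work, $O(\log M)$ span, and $O(m_s^{4})=O(m^{1/2})$ noisy operations (the $\log$ factor in Lemma~\ref{lem:3D-brute-force} being trivial-repetition overhead on its $O(n^{4})$ noisy sidedness tests). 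Summed over the $O(m)$ bottom subproblems this is $O(m^{3/2}\log M)$ work, $O(\log M)$ span, and $O(m^{3/2})$ noisy operations, and since everything here is already w.h.p.\ in $M$ no failure sweeping is needed. Adding the $O(1)$ recursive levels, the recombination, and the bottom-level cost gives span $O(\log M)$, work $O(m\log M)+O(m^{3/2}\log M)=O(m^{3/2}\log M)$, and $O(m)+O(m^{3/2})=O(m^{3/2})$ noisy operations; a union bound over the $O(m)\le O(M)$ sampling events and bottom subproblems, each failing with probability $\le 1/M^{c}$, gives overall failure probability $O(1/M^{c-1})$, which is high probability in $M$ for $c$ chosen large enough.

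The step I expect to be the main obstacle is making modification (ii) rigorous, i.e.\ justifying that sampling with $r=\Theta(m^{1/8})$ --- which is much larger than the usual $n^{\varepsilon}$ and, at deep levels, comparable to $n$ --- still enjoys the Reif--Sen per-subproblem and per-level size guarantees, and that those guarantees hold with probability $\ge 1-1/M^{c}$ rather than merely w.h.p.\ in the local size $n$. The point is that $n\le m$ throughout, so the relative-approximation / $\varepsilon$-net constants that normally depend on $n$ can be fixed as functions of $m$ (equivalently $M$); one must also check that the $(\log m)^{t}$ factors accumulated over the constant number of levels stay within the polynomial slack between $m^{1/8}$ and $m^{1-7/8}$, and that the ``resample on overflow'' step preserves the $O(\log M)$-span, $O(m)$-noisy-operation budget. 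Everything downstream of this is bookkeeping on top of Lemma~\ref{lem:3D-brute-force} and the noise-tolerant primitives already established.
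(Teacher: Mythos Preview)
Your proposal is correct and follows essentially the same approach as the paper: the paper's ``proof'' is precisely the paragraph immediately preceding the lemma, which describes the two modifications (tune all noisy operations and sampling to succeed w.h.p.\ in $M$; fix the sample size at $O(m^{1/8})$ regardless of current subproblem size), the eight-level recursion down to subproblems of size $O(m^{1/8})$, the direct application of Lemma~\ref{lem:3D-brute-force} to the $O(m)$ bottom subproblems for $O(m^{3/2}\log M)$ total work, and the recombination without failure sweeping. One small slip: the recombination subroutines you name (upper-tangent computations, max-finds) belong to the 2D algorithm; in 3D, the cone subproblems are merged by noisy sorting on the returned vertices to recover the adjacency structure (Reif--Sen, Lemma~2.2), though your per-level cost claim remains correct.
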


We can modify this algorithm to instead operate as a preprocessing step prior to recursion. 
For every subproblem of size $m = \omega(\sqrt M)$, 
perform the above steps for at most nine levels until 
we have decomposed $m$ into at most $O(M^{9/16})$ 
subproblems of size $O(\sqrt M)$ 
(by pruning, the sum of subproblem sizes is no more than $O(M)$). 
This can be performed in $O(\log M)$ span and $O(M\log M)$ work.
Then we recurse on all induced subproblems, tuning parameters such that each succeeds with high probability in $M^{1/2}$. 

If each subproblem is of size $O(M^{1/2})$
and we can recompute a given subproblem of size $m$ by brute force in time $O(m^{3/2}\log M)$,
then at most 
$O(M^{1/4})$ can fail such that we can recompute them all in 
$O(\log M)$ span and $O(M\log M)$ work.
We note that, due to Theorem~\ref{lem:span-thm}, pruning, and 
the structure of the recursion, 
even if a subproblem is of size $o(M^{1/2})$, 
we can still tune parameters to evaluate it with high probability in $M^{1/2}$ without affecting total asymptotic complexity.

\begin{theorem}\label{lem:brute-force-thm}
    We can recompute all failed subproblems w.h.p. in $M$ in $O(\log M)$ span and $O(M\log M)$ work.
\end{theorem}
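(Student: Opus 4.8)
The plan is to combine size reduction with the non-recursive brute force of Lemma~\ref{lem:3D-brute-force}, treating small and large failed child subproblems separately. By Theorem~\ref{lem:failure-sweep-3D} the parent of size $M$ first identifies all of its failed children w.h.p.\ in $M$; by the standard Clarkson sampling bound these children have total size $O(M)$, while each individual child has size at most $O(M^{7/8}\log M)$. For the failed children of size $O(M^{1/8})$ I would simply invoke Lemma~\ref{lem:3D-recompute-small}: their combined size is only $O(M^{1/4})$, so the $O(n^4\log n)$ brute force of Lemma~\ref{lem:3D-brute-force} recomputes all of them w.h.p.\ in $M$ within $O(\log M)$ span and work well within budget.

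For each remaining failed child, of size $m=\omega(M^{1/8})$, I would run the size-reduction preprocessing described above: at most eight levels of the modified Reif--Sen recursion in which every level is tuned to succeed w.h.p.\ in $M$ and always samples $O(m^{1/8})$ points. As in the analysis behind Lemma~\ref{lem:3D-medium-brute-force}, this decomposes such a child into $O(m)$ subproblems of size $O(M^{1/8})$ and total size $O(m)$, at $O(\log M)$ span and $O(m\log M)$ work per level; summed over the $\le 8$ levels and over all failed children (whose total size is $O(M)$) this stays within $O(\log M)$ span and $O(M\log M)$ work, and the at most $O(M^{1/8})$ such recursions succeed simultaneously w.h.p.\ in $M$ by a union bound. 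I would then solve each resulting induced subproblem (size $O(M^{1/8})$, total size $O(M)$) by recursing with the main noisy $3$D-hull algorithm, tuned only to succeed w.h.p.\ in $M^{1/8}$; invoking the inductive hypothesis at scale $M^{1/8}$, a size-$s$ solve costs $O(s\log M)$ work and $O(\log M)$ span, so this step is $O(M\log M)$ work and $O(\log M)$ span overall.

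It remains to clean up the induced subproblems that failed --- each succeeds only w.h.p.\ in $M^{1/8}$, so some do --- and to recombine. Because there are at most $O(M^{9/8})$ induced subproblems in all, Lemma~\ref{lem:3D-markov} guarantees that at most a constant number fail w.h.p.\ in $M$; each such failure has size $O(M^{1/8})$, so Lemma~\ref{lem:3D-brute-force} recomputes it w.h.p.\ in $M$ in $O((M^{1/8})^4\log M)=O(M^{1/2}\log M)$ work and $O(\log M)$ span, which over the constantly many failures is negligible. Finally I would merge everything back up through the $\le 8$ size-reduction levels using the noise-tolerant recombination of the original algorithm, at $O(\log M)$ span and $O(m\log M)$ work per level, hence $O(\log M)$ span and $O(M\log M)$ work in all. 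Adding the two phases yields the claimed bounds, w.h.p.\ in $M$.

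\emph{The main obstacle} is Lemma~\ref{lem:3D-markov}: we deliberately run a polynomial number, $O(M^{9/8})$, of subroutines each at the much weaker confidence level $M^{1/8}$, and must argue that w.h.p.\ in $M$ essentially none of those failures escapes, all while keeping the per-subproblem repetition overhead down to an $O(\log M)$ factor. The resolution is the ``larger fanout tolerated by higher confidence'' phenomenon noted in Section~\ref{sec:failure-sweep-general}: the failure probability of a single induced subproblem is $M^{-\Theta(c)}$ for a tunable constant $c$, so taking $c$ large enough makes the expected number of failures $M^{-\Omega(1)}$, whence Markov's inequality bounds the probability that more than a constant number fail by $M^{-\Omega(1)}$. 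A secondary, purely bookkeeping, point is ensuring that each ``work per level over $\le 8$ levels'' sum telescopes to $O(M\log M)$ rather than $O(M\log^2 M)$; this holds because the number of levels is a fixed constant and the per-level work is linear, up to a single $\log M$ factor, in the current total subproblem size, which by the sampling bound never exceeds $O(M)$.
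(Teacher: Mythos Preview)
Your argument is correct in outline but departs from the paper's in a structurally important way: you apply size reduction \emph{after} the initial recursion, only to those children that were swept as failed, and then launch a second full recursion (via the inductive hypothesis at scale $M^{1/8}$) on the resulting $O(M^{1/8})$-sized pieces. The paper instead applies size reduction as a \emph{preprocessing} step to every large child \emph{before} any recursion happens, so that the one and only recursion is already on pieces of size $O(M^{1/8})$. Consequently the paper's proof of Theorem~\ref{lem:brute-force-thm} is much shorter: once size reduction is preprocessing, every child subproblem that can fail is already small, Lemma~\ref{lem:3D-markov} immediately bounds the number of failed pieces by a constant, and a single invocation of Lemma~\ref{lem:3D-medium-brute-force} per failed piece (cost $O((M^{1/8})^{3/2}\log M)$ each) finishes the job---no second recursion, no appeal to the inductive hypothesis, and no second round of sweeping.

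Your route still works, but it buys nothing and costs modularity: your Theorem~\ref{lem:brute-force-thm} is no longer self-contained but depends on the inductive hypothesis of Theorem~\ref{lem:3D-final-thm}, and you must re-invoke Theorem~\ref{lem:failure-sweep-3D} on the induced pieces to identify which of them failed (a step you leave implicit, though its cost is clearly within budget). A minor point: the paper uses Lemma~\ref{lem:3D-medium-brute-force} rather than Lemma~\ref{lem:3D-brute-force} for the final recomputation; either stays within $O(M\log M)$ for constantly many size-$O(M^{1/8})$ pieces, so this difference is cosmetic.
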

\begin{proof}
    Follows from Theorem~\ref{lem:failure-sweep-3D}, our analysis above, and an application of Lemma~\ref{lem:3D-medium-brute-force} to any failed subproblems. 
    We will now show that not too many fail w.h.p. in $M$.
    Let $X$ be a random variable representing the number of subproblems that fail and say that each subproblem fails with probability $M^{-c'/2}$ for some constant $c'$ that we will determine.
    Then $E[X] = O(M^{9/16})\times M^{-c'/2} = O(M^{(9 - 8c')/16})$.
    Recall that we require no more than $O(M^{1/4})$ subproblems to fail. 
    Applying Markov's inequality, we have that 
    $$\Pr[X \geq  M^{1/4}] = O(M^{(9 - 8c')/16}) \times M^{-1/4}$$
    $$ = O(M^{(5 - 8c')/16}).$$

    To be a high-probability bound, this failure probability must be less than $M^{-1}$. 
    Then we can recompute all failed subproblems w.h.p. in $M$ so long as $(5 - 8c')/16 < -1$, and so $c' > 21/8$.
    
\end{proof}

Now we are ready to prove that failure sweeping, brute-force recomputation, and size reduction allow us to compute the intersection of $n$ halfspaces in optimal span and work w.h.p. in $n$. 

\begin{theorem}\label{lem:3D-final-thm}
    We can compute the intersection of $n$ halfspaces in $O(\log n)$ span and $O(n\log n)$ work w.h.p. in $n$, even with noisy primitive operations.
\end{theorem}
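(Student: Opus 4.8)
The plan is to prove this by induction over the recursion tree of the noise-tolerant Reif--Sen algorithm constructed in Appendix~\ref{sec:3D-supp}. In that algorithm every geometric primitive is replaced by a noise-tolerant surrogate whose parameters are tuned, at a node processing a subproblem of size $N$, so that the surrogate errs with probability at most $N^{-\Theta(1)}$, and each node additionally runs a detect-and-repair pass over its children. The inductive claim is that a recursive call on a subproblem of size $N$ returns the correct halfspace intersection with high probability in $N$, within span and work governed by the Reif--Sen recurrences: recursion depth $O(\log\log N)$, the child subproblems at each level having total size $O(N)$ by pruning, and the per-node span and work being $O(\log(\cdot))$ and $O((\cdot)\log(\cdot))$ in the node's own size. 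Because the node sizes contract by a fixed power at each level, the per-level $\log$ factor shrinks geometrically, so these recurrences telescope to $O(\log N)$ span and $O(N\log N)$ work.

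For the inductive step at a node of size $M$, the argument runs as follows. First apply size reduction as a preprocessing step: each child cone of size $\omega(M^{1/8})$ is decomposed, in at most eight sublevels tuned to succeed with high probability in $M$ (Lemma~\ref{lem:3D-medium-brute-force} and the discussion preceding it), into subproblems of size $O(M^{1/8})$; the $O(M^{1/8})$ original children thereby become $O(M^{9/8})$ induced subproblems whose total size is nonetheless $O(M)$ by pruning (the observation following Theorem~\ref{lem:brute-force-thm}). Recurse on each induced subproblem; by the inductive hypothesis each succeeds with high probability in its own size $\Theta(M^{1/8})$, hence only with high probability in $M^{1/8}$. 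Lemma~\ref{lem:3D-markov} upgrades this to: with high probability in $M$, at most a constant number of the induced subproblems are incorrect; together with Lemma~\ref{lem:3D-recompute-small} for the cones that were already of size $O(M^{1/8})$, the number of children needing repair is $O(1)$ with high probability in $M$. We then failure-sweep all children by Theorem~\ref{lem:failure-sweep-3D}, brute-force-recompute the flagged ones by Theorem~\ref{lem:brute-force-thm}, and recombine the now-correct children with the noise-tolerant recombination; each of these three passes costs $O(\log M)$ span and $O(M\log M)$ work with high probability in $M$, which is absorbed into the additive terms of the Reif--Sen recurrences. Size reduction does not inflate the work, since work is proportional to total subproblem size and pruning keeps that at $O(M)$, and it adds only a constant number of sublevels per original level, so the recursion depth stays $O(\log\log M)$ and the span and work still telescope. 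Hence the node returns correctly with high probability in $M$ within budget, completing the induction. Taking $N=n$ at the root yields correctness with high probability in $n$, $O(\log n)$ span, and $O(n\log n)$ work; as the recursion tree has $\mathrm{poly}(n)$ nodes, a union bound over all of them keeps the total failure probability polynomially small in $n$. The 3D convex hull result is immediate by point--halfspace duality.

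I expect the main obstacle to be the probability bookkeeping around size reduction: one must verify that the recursive calls on the $O(M^{9/8})$ induced subproblems genuinely satisfy the hypothesis of Lemma~\ref{lem:3D-markov} (each failing with high probability in $M^{1/8}$, and in a way that supports the Markov-style counting of failures), that the induced fan-out $O(M^{9/8})$ is small enough relative to that confidence --- exactly the fan-out-versus-confidence trade-off flagged at the end of Section~\ref{sec:failure-sweep-general} --- and that ``a constant number of failures'' is genuinely cheap to repair, which relies on each failed induced subproblem having size only $O(M^{1/8})$, so that brute-forcing $O(1)$ of them costs $O((M^{1/8})^{3/2}\log M)\subset O(M\log M)$ work. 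A secondary point is confirming that interleaving the constant-depth size-reduction phases with the main recursion preserves both the $O(\log\log n)$ depth and the geometric contraction of subproblem sizes that makes the span sum to $O(\log n)$.
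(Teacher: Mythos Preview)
Your proposal is correct and follows essentially the same route as the paper: induction over the recursion tree, size reduction to cap induced subproblems at $O(M^{1/8})$, Lemma~\ref{lem:3D-markov} to bound the number of failed children by a constant w.h.p.\ in $M$, Theorem~\ref{lem:failure-sweep-3D} plus Theorem~\ref{lem:brute-force-thm} for detect-and-repair, and the geometric telescoping of the Reif--Sen recurrences. The paper differs only in that it makes the constants explicit (counting $O(M)$ noisy operations at a node and union-bounding to force $c'\geq 17$, $c\geq c'+1\geq 18$), whereas you flag this bookkeeping as an expected obstacle; and your closing sentence about a global union bound over all $\mathrm{poly}(n)$ nodes is superfluous, since the inductive upgrade already delivers correctness w.h.p.\ in $n$ at the root.
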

\begin{proof}
    See Appendix~\ref{sec:3D-supp-pf}. 
\end{proof}

\section{Discussion}
In this paper, we have shown how to use our generalization of the failure sweeping technique to compute convex hulls in 2D and 3D in optimal span and work with high probability in $n$. We believe that our techniques can be adapted to solve more geometric problems in the parallel, noisy primitives model, such as general planar point location, visibility, and convex hulls in higher dimensions. We also hope that our new size reduction technique inspires future work in other settings in which subproblems fail with some probability as a function of their input size. 

\section*{Acknowledgements}
We thank Ryuto Kitagawa for helpful suggestions on writing and organization.


\small
\bibliographystyle{abbrv}
\bibliography{refs,refs2}

\clearpage
\appendix

\section{Review of PRAM and Work-Span Analysis}\label{sec:parallel-review}

In this work, we use the PRAM (Parallel Random Access Machine) model. This parallel model assumes that an algorithm has access to unbounded shared memory which is accessed concurrently at each step of computation. There are variants of the PRAM that specify whether or how processors are allowed to access the same memory location concurrently. The model most prominently used in this work is the CREW (Concurrent Read, Exclusive Write) PRAM, which allows processors to read concurrently but prohibits concurrent writes. Other models we mention are the EREW (Exclusive Read, Exclusive Write) PRAM, which prohibits any concurrent access and the CRCW (Concurrent Read, Concurrent Write), which allows both concurrent reads and writes. 

We use the work-span framework to analyze our parallel algorithms. Assume that a parallel algorithm performs all of its operations in $s$ parallel steps. Parallel steps happen sequentially, i.e., all processors are at the same step at the same time. In each step, each processor may perform $O(1)$ constant-time operations. Let $w_i$ be the number of operations performed in step $i$. We call $s$ the \emph{span} of the algorithm and the total number of operations performed, i.e., $\sum_{i=1}^s w_i$, the \emph{work} of the algorithm. 
\section{Review of Path-guided Pushdown Random Walks}\label{sec:random-walks}
The path-guided pushdown random walks technique was first described by Eppstein, Goodrich, and Sridhar in~\cite{eppstein2025computational}. 
The technique is a generalization of noisy binary search to search on DAGs in which noisy primitives are used to determine which node to traverse to next. 
They require that any query value has a unique valid path $P$ in the DAG. 
At its core, their search is a backtrack search. 
We maintain a stack $S$ of nodes visited and, if we determine we are on the wrong path, our next step is to backtrack by popping the first vertex off of $S$. 
Their unique contribution comes from how to determine if we are on the incorrect path. They model this process using what they call a \emph{transition oracle} $T$. Given a vertex $v$, the job of the transition oracle is to determine the next vertex in the DAG to move to, or to determine whether to backtrack. However, with some fixed probability $p_e < 1/2$, the transition oracle $T$ ``lies'' and returns an arbitrary answer (with the restriction that any vertex it returns must be an outgoing neighbor of $v$ or the vertex at the top of $S$). If the transition oracle acts truthfully and $v=t$, our goal node, then the transition oracle pushes a copy of $t$ to the stack, an extra insurance to prevent us from deviating. 

Assuming that, when $T$ is truthful, it can determine whether we are on the correct path and the next node to advance to using $O(1)$ primitive operations, Eppstein {\it et al.} prove the following theorem.

\begin{theorem}[Theorem 1 of~\cite{eppstein2025computational}]
    Given an error tolerance $\varepsilon = n^{-c}$ for $c > 0$, and a DAG, $G$, with a path, $P$, from a vertex $s$ to a distinct vertex $t$, the path-guided pushdown random walk in $G$ starting from $s$ will terminate at $t$ with probability at least $(1-\varepsilon)$ after $N = \Theta(|P| + \log(1/\varepsilon))$ steps, for a transition oracle, $T$, with error probability $p_e < 1/15$.
\end{theorem}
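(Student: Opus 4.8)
The plan is to analyze the pushdown walk as a biased one-dimensional random walk on a potential that measures progress toward $t$. For a stack state $S=(s=u_0,u_1,\dots,u_k)$, let $j$ be the length of the longest prefix $u_0,\dots,u_j$ that agrees with a prefix of $P$ — extended past $t$ by repeated copies of $t$, to accommodate the duplicate‑pushing rule — and call $k-j$ the ``junk count.'' Define $\Phi(S)=j-(k-j)=2j-k$, so that $\Phi$ rises exactly when the valid prefix is extended and falls exactly when junk is added or fails to be removed; note $\Phi(\{s\})=0$ and $\Phi\ge 0$ throughout, since the oracle never pops $s$. Two structural facts drive the argument. (i) $\Phi$ attains its maximum value $|P|$ precisely when the current vertex $u_k$ equals $t$, and, under the trailing‑duplicates convention, this set of states is closed under truthful transitions. (ii) A truthful oracle call increases $\Phi$ by exactly $1$ unless $\Phi$ is already maximal — a truthful call either advances one step along $P$, pops one junk vertex, or, at $t$, pushes a duplicate $t$ that still counts as valid progress — whereas a lying call changes $\Phi$ by at most a constant and, in the worst case, decreases it by a constant.

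Granting these, let $L_i$ be the number of the first $i$ steps on which $T$ lied; since $T$ errs independently with probability at most $p_e$, $L_i$ is stochastically dominated by a $\mathrm{Binomial}(i,p_e)$ variable. Telescoping fact (ii) over the first $i$ steps yields $\Phi_i\ge (i-L_i)-O(1)\cdot L_i$ as long as this bound is below the cap $|P|$, and once the cap is reached truthful steps maintain it; hence $\Phi_i=|P|$ (so the walk is at $t$) as soon as $(i-L_i)-O(1)\cdot L_i\ge |P|$. I would then set $N=c_1|P|+c_2\log(1/\varepsilon)$. With $c_1$ a large enough constant, on the event $\{L_N\le (p_e+\delta)N\}$ we get $(N-L_N)-O(1)\cdot L_N\ge |P|$, so $\Phi_N=|P|$; this is exactly where the hypothesis on $p_e$ enters, as we need $1-(1+O(1))(p_e+\delta)>0$ with room for the slack $\delta$. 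With $c_2$ large enough, a Chernoff bound on $L_N$ gives $\Pr[L_N>(p_e+\delta)N]\le\varepsilon$. Together, after $N=\Theta(|P|+\log(1/\varepsilon))$ steps the current vertex is $t$ with probability at least $1-\varepsilon$; since $\varepsilon=n^{-c}$ this is $\Theta(|P|+\log n)$.

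The delicate point — and the step I expect to be the main obstacle — is the endgame: we must sit at $t$ at exactly step $N$, not merely have visited it at some earlier time, so the potential must be engineered so that the ``$t$‑states'' are truthfully absorbing, which is precisely the role of the rule that pushes a second copy of $t$ (without it, a truthful step at $t$ would be forced to backtrack and leave). Making this rigorous means carefully tracking the two configurations $(\dots,t)$ and $(\dots,t,t)$, verifying that truthful calls merely oscillate between them while a lie costs only a bounded amount, and then observing that the net‑progress estimate above already certifies ``reached and stayed.'' Finally, the explicit threshold $p_e<1/15$ should emerge from simultaneously meeting the positive‑drift inequality and the Chernoff concentration with concrete constants; I would pin down $\delta$, $c_1$, $c_2$, and the worst‑case per‑lie cost to make $1/15$ fall out (any fixed constant below $1/2$ works, at the price of larger hidden constants in $N$).
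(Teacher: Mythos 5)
This theorem is imported verbatim from~\cite{eppstein2025computational}; the paper itself supplies no proof, only a one-paragraph sketch (each truthful oracle action undoes an arbitrary one, the bias $p_e<1/15$ makes good steps outweigh bad, and a Chernoff bound on the number of lies finishes), and your potential-function argument with $\Phi=2j-k$ is a fleshed-out version of exactly that approach. The one genuine inaccuracy is your claim (i): under the trailing-duplicates convention $\Phi$ is not capped at $|P|$, and $\Phi\ge|P|$ does not by itself force $u_k=t$ (the stack consisting of $P$ followed by several copies of $t$ and one junk vertex has $\Phi\ge|P|$ with junk on top), so the endgame you yourself flag as delicate cannot be dispatched by the ``cap is maintained'' shortcut and instead needs the uncapped bookkeeping that separately bounds the junk remaining at time $N$ — but that is precisely the detail the cited proof supplies, and otherwise your route is the paper's.
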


The basic idea of the proof is that the design of the transition oracle allows each truthful action to undo each action done when $T$ acts arbitrarily. Because $p_e < 1/2$, the good actions outweigh the bad. The authors show that, using a Chernoff bound, we reach our goal node $t$ after $N$ steps and terminate after a further $O(\log(1/\varepsilon))$ steps. The remainder of their work describes how to determine if a query is on the correct path in a DAG in $O(1)$ primitive operations, as this implies that a transition oracle exists for that DAG. This is not trivial to do. For example, it is not obvious how to construct transition oracles for tree-based priority queues. Nevertheless, they were able to describe transition oracles for a variety of geometric DAGs.

\subsection{A Transition Oracle for Computing the Upper Tangent of Two Convex Hulls.}
\label{sec:upper-tangent-oracle}
Our parallel 2D convex hull algorithm requires us to compute several upper tangents between pairs of upper hulls in parallel. The non-noisy convex hull algorithms of~\cite{goodrich1987efficient} and~\cite{aggarwal1988parallel} do this by applying the sequential double binary search technique of Overmars and van Leeuwen~\cite{overmars1981maintenance} to every pair of upper hulls. To make this operation noise-tolerant, we apply the noise-tolerant upper-tangent algorithm developed by Eppstein, Goodrich, and Sridhar~\cite{eppstein2025computational} using path-guided pushdown random walks. 

In the original algorithm of~\cite{overmars1981maintenance}, each upper hull is represented as a balanced binary tree of the hull points, sorted by $x$-coordinate. Their algorithm repeated attempted to draw a tangent line from one hull to the other at the current two points we are considering. Depending on how that tangent line intersected the hulls, Overmars and van Leeuwen developed an analysis of ten cases determining how to recurse on each binary tree at each step. Each step always involves us going deeper in at least one of the trees, therefore the algorithm takes $O(\log n)$ time, assuming that both trees have depth $O(\log n)$. 

To apply path-guided pushdown random walks, Eppstein {\it et al.} considers the implicit decision tree created by the algorithm. Each node represents a state of the algorithm, and each child represents choosing one of ten cases to advance one of the two trees. This decision tree has depth $O(\log n)$ for the same reason the algorithm takes $O(\log n)$ time. They show that, using four invocations of Overmars and van Leeuwen's case analysis, we can verify if we are on the correct path of this decision tree. For more details, see~\cite{eppstein2025computational} (Section 6.2).

\section{Review of Non-noisy Algorithms}

In this section, we review the classic works on which we base our noise-tolerant algorithms.

\subsection{2D Convex Hull Algorithm for CREW PRAM.}\label{sec:2dhull-original}
Here we review the $O(\log n)$-span, $O(n\log n)$-work parallel convex hull algorithm independently discovered by Aggarwal, Chazelle, Guibas, Ó'Dúnlaing and by Goodrich and Atallah~\cite{goodrich1987efficient,aggarwal1988parallel}. Note that it is sufficient to describe a convex hull algorithm that computes the upper hull, as a symmetric algorithm can compute the corresponding lower hull of a set of points. 

We first sort all $n$ points by their $x$-coordinate in $O(\log n)$ span and $O(n\log n)$ work, e.g., using Cole's merge sort~\cite{cole1988parallel}. Then we partition them into $\sqrt n$ groups of $\sqrt n$ points. This partition is done by their index, e.g., points 1 through $\sqrt n$ in the sorted order belong in group 1, and so on. We recursively solve each group in parallel, and each group ultimately returns with an upper hull of their $\sqrt n$ points. 

Once we have $\sqrt n$ upper hulls, we can combine them in parallel. We first compute an upper tangent for every pair of upper hulls using the double binary search of Overmars and van Leeuwen~\cite{overmars1981maintenance}. 

We next determine the contribution of each upper hull $U_j$ to the combined upper hull of all $n$ points. To do this, we compute $V_j$, the tangent of smallest slope between $U_j$ and some other $U_i$ where $i < j$. Likewise, we compute $W_j$, the tangent of largest slope between $U_j$ and some other $U_k$ where $k > j$. Let $v_j$ and $w_j$ be the corresponding tangent points on $U_j$. If the angle between $V_j$ and $W_j$ around their intersection point pointing upward is $\geq 180\degree$, then points from $v_j$ to $w_j$ on $U_j$ are on the combined upper hull. Otherwise, $U_j$ contributes no points to the upper hull. 
The authors' rationale for choosing $V_j$ and $W_j$ 
is that $U_i$ and $U_k$ are the two hulls that are 
best able to ``cover up'' $U_j$ such that it does not contribute to the combined upper hull. Their respective tangents with $U_j$ indicate that they are the ``highest'' hulls on each side of $U_j$. Finally, a parallel prefix operation is used to combine contributions from each $U_j$ into the combined upper hull. 

Now we discuss the span and work performed at each level of recursion. There are $\binom{{\sqrt n}}{2} = O(n)$ pairs of upper hulls. The double binary search method of Overmars and van Leeuwen~\cite{overmars1981maintenance} takes $O(\log n)$ time sequentially. If we perform each tangent computation in parallel, this takes $O(\log n)$ span and $O(n\log n)$ work. To compute $W_j$, we perform a max-find operation on the $O(\sqrt n)$ tangents that touch $U_j$. Doing this in parallel over all $U_j$ takes $O(\log n)$ span and $O(n\log n)$ work. The same cost applies to computing $V_j$. It then takes constant span and $O(\sqrt n)$ work to determine the convex chain each $U_j$ contributes to the combined hull. The final parallel prefix operation of the $O(\sqrt n)$ convex chains takes $O(\log n)$ span and $O(\sqrt n)$ work. 

For $n \leq 2$, span and work are both constant as the convex hull is simply a line segment between the two points. We conclude that the total span and work for the algorithm are given by the following two recurrences:
$$T(n) = T(\sqrt n) + O(\log n)$$
$$W(n) = \sqrt n W(\sqrt n) + O(n\log n)$$

With $T(2), W(2) = O(1)$. Observe that both recurrences can be bounded by a geometric series. 
$$T(n) \leq O(\log n) \times \sum_{i=0}^\infty 1/2^i = O(\log n)$$
$$W(n) \leq O(n\log n) \times \sum_{i=0}^\infty 1/2^i = O(n\log n)$$

The initial sorting takes $O(\log n)$ span and $O(n\log n)$ work~\cite{cole1988parallel}, so the algorithm takes $O(\log n)$ span and $O(n\log n)$ work overall in the CREW PRAM model~\cite{aggarwal1988parallel,goodrich1987efficient}.

\subsection{3D Convex Hull Algorithm for CREW PRAM.}\label{sec:3dhull-original}
We adapt our algorithm mostly from the optimal $O(\log n)$-span and $O(n\log n)$-work randomized 3D convex hull algorithm of Reif and Sen~\cite{reif19923DHull}. 
First, we note that Reif and Sen solve the dual problem of halfspace-intersection of $n$ halfspaces in 3D. This allows them to take advantage of an earlier result of Clarkson~\cite{clarkson1988applications}, who showed that a random sample of halfspaces $R$ can be used to divide the remaining $H\setminus R$ halfspaces into subproblems of roughly equal size, admitting a natural divide-and-conquer algorithm. Specifically, these good samples divide the remaining halfspaces into groups with the following two properties: (1) the largest group has size at most $O(|H|\log |R|/|R|)$ and (2) the sum over all groups is of size $O(|H|)$. However, Clarkson only showed that this is the case with constant probability $> 1/2$. 

To turn this sampling probability into a high-probability bound, Reif and Sen introduce the notion of \emph{polling}. Rather than one sample $R$, they take $O(\log n)$ samples $R_j$ and test if any are ``good'' as defined by Clarkson. It would be too expensive to evaluate each $R_j$, so instead of sampling each from $H$, they first sample $n/\log^d n$ elements from $H$ into $H_j$, for $d > 2$. Then they sample $R_j$ from $H_j$. By shaving this polylog factor, Reif and Sen show that we can evaluate whether each $R_j$ is a good sample for $H_j$ in $O(\log n)$ span and $O(n\log n)$ work. Through applications of Chernoff bounds, they show that (1) at least one $R_j$ is good with respect to $H_j$, and (2) $R_j$ is a good sample for the original $H$ as well, despite the polylog factor decrease. Both occur with high probability in $n$. Note that if a good sample was not found, we could simply spend the same amount of span and work to repeat the process. 

Say that we have found a good sample $R$. In the process to determine whether $R$ was good, Reif and Sen use a non-optimal halfspace intersection algorithm to compute the polyhedron $\mathcal R$ in $O(\log n)$ span and $O(n\log n)$ work. We can use this structure to define our divide-and-conquer. Observe that the halfspace intersection $\mathcal C$ of $H$ can only get smaller than $\mathcal R$, thus the subproblems which we recursively solve must involve some partition of the interior of $\mathcal R$. Specifically, this is done by considering some point $o$ assumed to be in the halfspace intersection of $H$. We draw lines from each vertex of $\mathcal R$ to $o$ and triangulate the faces of $\mathcal R$ such that we have a set of at most $2|R|$ ``cones'', their a base a triple of vertices of $\mathcal R$ and their apex $o$. Then each recursive problem would involve computing a halfspace intersection of the subset of halfspaces in $H \setminus R$ that intersect some cone $C_i$.

To determine the cones each halfspace in $H\setminus R$ stabs, the authors consider the dual problem. Taking the dual transform of each vertex of $\mathcal R$ gives us an arrangement of hyperspaces. Each cell of the arrangement corresponds to a combination of cones $C_i$ that some halfplane in the primal could intersect. By parallelizing the multidimensional search method of Dobkin and Lipton~\cite{dobkin1976multidimensional}, they show that, as long as $R$ is a sufficiently small sample, $|R| \leq n^{1/8}$, we can construct a search structure in $O(\log n)$ span and $O(n\log n)$ work that allows for fast queries to return the set of cones of $\mathcal R$ that a hyperspace $h$ intersects in the primal in. Note that this sample guarantees that a subproblem can be as large as $O(n^{7/8}\log n)$.

Lastly, Reif and Sen observe that, despite our sample $R$ being ``good'', it is not good enough to prevent a blowup in problem size as we go down 
recursive levels. This is because halfspaces may intersect with multiple cones. If $X_i$ is the number of halfspaces of $H \setminus R$ that intersect cone $C_i$, then Clarkson showed that $\sum_i X_i = O(n)$~\cite{clarkson1988applications}, meaning the total problem size may increase by a constant factor every level of recursion. There are $O(\log\log n)$ levels of recursion, so this constant factor blow-up implies total problem size may expand to $O(n\log^{O(1)} n)$, increasing work by a polylog factor. Thus, their final step before recursion involves pruning redundant halfspaces to ensure that total problem size at each level of recursion is no more than $O(n)$.

In their work, Reif and Sen classify different types of redundancies and describe a two-step approach to pruning the halfspaces of cone $C_i$, first by applying a 3D maxima algorithm and then by computing a simplified 3D hull they call a skeletal hull. Information from these outputs can be used to narrow down only the halfspaces that both intersect $C_i$ and contribute a vertex to the final halfspace intersection $\mathcal C$ that lies in $C_i$. 

However, a later work by Amato, Goodrich, and Ramos~\cite{amato1994parallel} provides a simpler way to prune. They show that we can perform a sufficient amount of pruning from information given to us by the \emph{contours} of $C_i$. A contour is the result of projecting the halfspaces that intersect $C_i$ onto the three faces of the cone incident to the apex $o$ and then computing the 2D convex chain of the projection on each face. Given these contours, they show that a constant number of binary searches per halfspace and one sorting step per cone can be used to prune a sufficient number of halfspaces. By the fact that the total problem size among all cones is $O(n)$, we can compute contours for all cones with $O(\log n)$ span and $O(n\log n)$ work. In addition, Amato, Goodrich, and Ramos show that it takes $O(\log n)$ span and $O(n\log n)$ work to prune given these contours. 

Finally, we recursively solve each subproblem defined by each cone $C_i$. Each subproblem returns with the subset of vertices of the halfspace intersection $\mathcal C$ located in that subproblem, from which an adjacency structure can be computed by an application of sorting~\cite{reif19923DHull} (Lemma 2.2). Once the subproblem size is smaller than $O(\log ^k n)$ for some constant $k$, the authors can apply a non-optimal time algorithm (for example the $O(\log^3 n)$-span, $O(n\log^3 n)$-work algorithm of Aggarwal, Chazelle, Guibas, Ó'Dúnlaing, and Yap~\cite{aggarwal1988parallel}) to solve the problem directly without increasing total asymptotic complexity.
Reif and Sen conclude that the span for their algorithm is bounded by following recurrence:
$$T(n) \leq T(n^{9/10}) + O(\log n)$$

The $n^{9/10}$ comes from the analysis of Clarkson, who showed that with a good sample, no cone has more than $O(n^{7/8}\log n) = O(n^{9/10})$ halfspaces that stab it. Observe that the recurrence is the summation of a converging geometric sequence, so $T(n) = O(\log n)$. For the total work, we simply observe that the pruning step guarantees that each level of recursion has total problem size $O(n_0)$, where $n_0$ is the initial problem size. Given that the span of the algorithm at each level of recursion is $O(\log n)$, we can upper bound the work like so:
$$W(n) \leq W(n^{9/10}) + O(n_0) \times O(\log n)$$

We again have a geometric series, so $W(n_0) = O(n_0\log n_0)$.
We conclude that, given a set of $n$ halfplanes in 3D, Reif and Sen's algorithm computes their halfspace intersection in $O(\log n)$ span and $O(n\log n)$ work for the CREW PRAM model~\cite{reif19923DHull}.

\section{Deferred Details for Our 2D Convex Hull Algorithm}
\label{sec:2D-supp}

\subsection{Omitted Proofs}

Now we are ready to prove inductively that our algorithm succeeds with high probability in $n$ and, in doing so, find the appropriate constant $c$ determining the probability that an individual noisy operation fails.

\begin{theorem}[Same as Theorem~\ref{thm:2dc}]
    We can compute the convex hull of $n$ points in the plane in $O(\log n)$ span and $O(n\log n)$ work w.h.p. in $n$, even with noisy primitive operations.
\end{theorem}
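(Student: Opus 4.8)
The plan is to argue by induction on the levels of the recursion tree, showing that a subproblem of size $N$ returns its correct upper hull with failure probability at most $1/N^{c}$, where $c$ is a sufficiently large constant that we fix at the very end so that all the union bounds and the concentration bound below hold simultaneously; taking $N=n$ at the root then gives the theorem, since the failure probabilities per noisy operation that this forces are exactly the constant $c$ we are asked to exhibit. The base case consists of subproblems of size at most a suitable constant (in particular $n\le 2$), which are solved exactly with no noisy operations. For the inductive step, consider a subproblem $S'$ of size $N=m^2$: by the structure of the Aggarwal--Chazelle--Guibas--\'O'D\'unlaing / Goodrich--Atallah recursion of Appendix~\ref{sec:2dhull-original}, it has exactly $m$ child subproblems, each of size $m$, and by the inductive hypothesis each child returns the correct upper hull \emph{independently} with failure probability at most $1/m^{c}$, the independence coming from the fact that distinct children operate on disjoint point sets using disjoint, freshly flipped noisy coins.

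First I would run the failure-sweep routine of Lemma~\ref{lem:2D-failure-sweep} on all $m$ children simultaneously. Tuned by a constant-factor increase in repetitions, it certifies each child correctly with failure probability at most $1/(m^2)^{c+1}$, costs $O(\log(m^2))$ span and, at $O(m\log(m^2))$ work per child, $O(m^2\log(m^2))=O(N\log N)$ work in total, using $O(m^2)$ noisy operations; a union bound over the $m$ children shows that, w.h.p.\ in $m^2$, every child is correctly labelled valid or invalid. Next, by Corollary~\ref{lem:2D-bound-fail}, since each child is independently actually invalid with probability at most $1/m^{c}$, at most $O(m^{1/3})$ children are invalid w.h.p.\ in $m^2$; this is the concentration step, and is where $c$ being large enough really matters. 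I would then brute-force recompute each of those $O(m^{1/3})$ flagged subproblems by the algorithm of Lemma~\ref{lem:2D-failure-sweep}, at $O(\log(m^2))$ span and $O(m^{3/2}\log(m^2))$ work each, for total extra work $O(m^{11/6}\log(m^2))\subseteq O(N\log N)$, with a union bound guaranteeing all recomputations succeed w.h.p.\ in $m^2$. After this step every one of $S'$'s $m$ children holds a correct upper hull with failure probability at most $1/(m^2)^{c}$.

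It remains to carry out the recombination step of $S'$ itself: the $O(m^2)$ upper-tangent computations via path-guided pushdown random walks, the max-finds defining the $V_j$ and $W_j$ via the noise-tolerant max-find of Feige {\it et al.}, and the non-noisy prefix sums, with each of the $O(m^2)$ noisy subroutines tuned to fail with probability at most $1/(m^2)^{c+1}$ at only a constant-factor cost. A union bound over all of them, together with the already-established correctness of the children, shows $S'$ outputs the correct upper hull of its $m^2$ points with failure probability at most $1/(m^2)^{c}$, closing the induction. Summing per-level costs over the $O(\log\log n)$ levels, the span telescopes as $\sum_{i\ge 0} O\!\left(\log n^{1/2^i}\right)=O(\log n)$, and since every level carries total subproblem size $O(n)$ the work telescopes as $\sum_{i\ge 0} O\!\left(n^{1-1/2^i}\right)\cdot O\!\left(n^{1/2^i}\log n^{1/2^i}\right)=O(n\log n)$; the initial noisy sort of Goodrich--Jacob adds $O(\log n)$ span and $O(n\log n)$ work and is correct w.h.p.\ in $n$. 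Running the symmetric construction for the lower hull and gluing the two chains (a non-noisy pointer operation) yields the full convex hull in $O(\log n)$ span and $O(n\log n)$ work w.h.p.\ in $n$.

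I expect the main obstacle to be the tension between the two work budgets in the inductive step: the failure sweep must be cheap enough ($O(N\log N)$ work) to run on \emph{every} child at \emph{every} level, whereas brute-force recomputation is super-linear in the subproblem size ($O(m^{3/2}\log m)$ rather than $O(m\log m)$), so it can be afforded on only a sublinear number of children. The whole argument therefore hinges on Corollary~\ref{lem:2D-bound-fail} capping the number of failed children at $O(m^{1/3})$ and on choosing $c$ large enough that this cap, the failure-sweep union bound, and the recombination union bound all hold w.h.p.\ in the \emph{parent's} size $m^2$ rather than merely in $m$ --- that bootstrapping of the confidence bound from level to level is the delicate part, and the reason a single uniform $c$ works must be checked carefully against every union bound in the recursion.
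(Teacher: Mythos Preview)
Your proposal is correct and mirrors the paper's own proof almost exactly: induction on subproblem size, failure-sweeping all $m$ children, capping the number of failed children at $O(m^{1/3})$, brute-force recomputation within the $O(N\log N)$ budget, and a union bound over the $O(m^2)$ noisy operations to close the induction at confidence $1/(m^2)^{c}$. One small point of attribution: Corollary~\ref{lem:2D-bound-fail} only supplies the \emph{work-budget} cap of $O(m^{1/3})$ failed children; the fact that at most this many children actually fail w.h.p.\ in $m^2$ is the separate Markov-inequality argument you correctly flag as ``the concentration step,'' which the paper carries out explicitly to extract the threshold $c'>8/3$.
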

\begin{proof}
    We will prove this by induction. Our inductive hypothesis is that we can compute a subproblem of size $m$ in $O(\log m)$ span and $O(m\log m)$ work with failure probability no more than $m^{-c'}$. When $m \leq 2$, this holds true because no noisy comparisons are needed to draw an edge between two points. 

    Consider a subproblem $S'$ of size $m^2$. To complete the proof, we will show that it can be solved in $O(\log(m^2))$ span and $O(m^2\log(m^2))$ work with failure probability no more than $m^{-2c'}$. $S'$ must combine $m$ subproblems of size $m$. We expect $m / m^{c'}$ to fail by our inductive hypothesis, yet we require that no more than $m^{1/3}$ fail by Corollary~\ref{lem:2D-bound-fail}. Let $X$ be a random variable representing the number of subproblems that fail. Markov's inequality states that $\Pr[X \geq m^{1/3}] \leq (m/m^{c'}) \times (1/m^{1/3}) = m^{2/3 - c'}$. To be a high probability bound in $m^2$, $2/3 - c' < -2$, so $c' > 8/3$.

    Continuing the inductive argument, we must show that $S'$ succeeds in outputting an upper hull of its points with failure probability no more than $m^{-2c'}$. Notice that every noisy max-find and upper tangent computation performed during brute-force reconstruction and the later combination step individually fails with probability no more than $m^{-2c}$. To determine the constant $c$, we count the number of noisy operations done and take a union bound over all of them to compute the probability that at least one fails. 

    By Lemma~\ref{lem:2D-failure-sweep}, sweeping for errors takes $O(m)$ noisy operations total, and each brute-force reconstruction takes $O(m)$ operations total. At most $m^{1/3}$ subproblems fail with high probability in $m^2$, so brute-force reconstruction requires at most $O(m^{4/3})$ noisy comparisons. Combining each subproblem $S$ into an upper hull over $S'$ requires $O(1)$ noisy max-find and upper tangent operations per point, so $O(m^2)$ operations. In total, we use $O(m^2)$ noisy operations to compute the upper tangent of $S'$. 

    Taking the union bound over all $O(m^2)$ invocations, if a single noisy operation fails with probability $m^{-2c}$, at least one fails with probability $\leq m^{-2(c - 1)}$. If we take $c \geq 1 + c' > 11/3$ and adjust constant factors associated with each operation accordingly, computing the upper hull for $S'$ takes $O(\log(m^2))$ span and $O(m^2\log(m^2))$ work and fails with probability $\leq m^{-2c'}$. 

    We conclude that our algorithm succeeds with high probability in $n$, and the total span and work for our algorithm are bounded by the following recurrences:
    $$T(n) = T(\sqrt n) + O(\log n)$$
    $$W(n) = \sqrt n W(\sqrt n) + O(n\log n)$$
    These are the same recurrences that bound the span and work of the original algorithms~\cite{goodrich1987efficient,aggarwal1988parallel}. They evaluate to $O(\log n)$ span and $O(n\log n)$ work respectively. 
\end{proof}

\subsection{Additional Details}
\label{sec:2D-supp-2}

\begin{figure*}[hbt]
    \centering
    \includegraphics[width=0.6\linewidth]{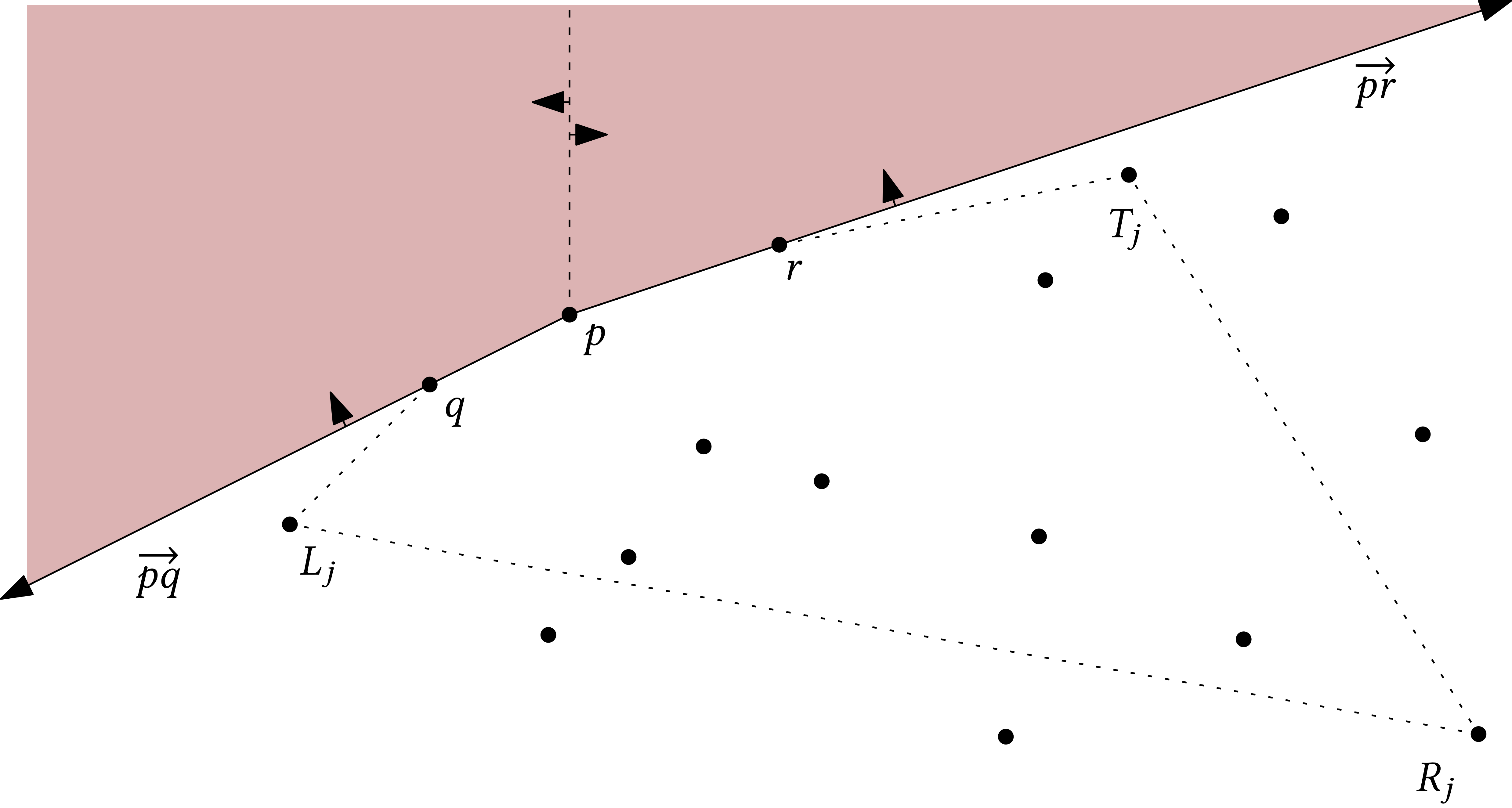}
    \caption{We depict an instance of the failure sweep brute-force
    algorithm determining that a point, $p$, is a part of the upper
    hull in $S_j$. In this case, $[L_j, q, p, r, T_j, R_j]$ is
    convex. We have depicted the lines $\protect\overrightarrow{pr}$ and
    $\protect\overrightarrow{pq}$ as well as the vertical line through $p$
    that borders both regions defined by the lines. Notice, as the
    proof describes, no points in $S_j$ can exist in the region
    either to the left of $\protect\overrightarrow{pr}$ and to the right
    of $\protect\overrightarrow{pq}$, depicted with shading. It is clear that $p$ is an extreme point in $S_j$.}
    \label{fig:brute-force-2dch}
\end{figure*}

\begin{lemma}[Same as Lemma~\ref{lem:2D-extreme}]
    \hphantom{}\\
    If either $[L_j, q, p, r, T_j, R_j]$ or $[T_j, q, p, r, R_j, L_j]$ is a convex chain, then $p$ is a member of the upper hull of $S_j$. 
\end{lemma}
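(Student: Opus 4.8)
The plan is to produce an explicit line through $p$ that leaves all of $S_j$ weakly below it; by the standard characterization of the upper hull, exhibiting such a supporting-from-above line certifies $p\in\mathrm{UH}(S_j)$. I will work in generic position (distinct $x$-coordinates, no three points collinear) and, as in the regime where the lemma is applied, assume $p\notin\{L_j,R_j,T_j\}$; the coincidences $q=L_j$ or $r=T_j$ (each of which merely produces a repeated vertex in one of the chains, read away by allowing straight turns in a ``convex chain'') and the collinear cases are absorbed at the end by a routine perturbation argument. The first observation uses no convexity at all: since $p\neq L_j,R_j$, there are points of $S_j$ strictly on either side of the vertical line through $p$, so the counterclockwise gift-wrap neighbor $q$ and the clockwise one $r$ are both defined, and their rays $\overrightarrow{pq},\overrightarrow{pr}$ flank the upward vertical ray at $p$. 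Let $W$ be the open wedge with apex $p$, bounded by $\overrightarrow{pq}$ and $\overrightarrow{pr}$, that contains the upward vertical direction, and let $\overline W$ be the closed complementary wedge (the one containing the downward direction). Because $q$ and $r$ are precisely the points whose rays deviate least from the upward vertical on their respective sides, no point of $S_j\setminus\{p\}$ lies in the interior of $W$; hence every point of $S_j$ lies in $\overline W$.

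The heart of the argument is to show that if $[L_j,q,p,r,T_j,R_j]$ or $[T_j,q,p,r,R_j,L_j]$ is a convex chain, then $\overline W$ is non-reflex, i.e.\ has angular width at most $180\degree$. Take the first chain; the second is its mirror image under $x\mapsto -x$, which also swaps $L_j\leftrightarrow R_j$ and $q\leftrightarrow r$, so the same argument applies. Convexity of the chain means the six points are in convex position in the listed cyclic order, and since $T_j$ is the topmost point of $S_j$ it lies on the upper hull of these six points; as the upper hull of a convex polygon is the contiguous arc between the leftmost vertex $L_j$ and the rightmost vertex $R_j$ that contains $T_j$, that arc is exactly $L_j,q,p,r,T_j,R_j$, so in particular $p$ lies on the upper hull of these six points. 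Hence there is a supporting line $\ell$ through $p$ with all six points weakly below it, and $\ell$ is non-vertical because $L_j$ is strictly left of $p$ and $R_j$ strictly right of $p$ while both are below $\ell$. Now $q$ and $r$ lie weakly below $\ell$, so the rays $\overrightarrow{pq},\overrightarrow{pr}$ lie in the closed lower half-plane $H$ bounded by $\ell$; since $H$ is convex it contains the short wedge spanned by these two rays, whereas the upward ray at $p$ does not lie in $H$ ($\ell$ being non-vertical). Therefore the wedge containing the upward direction is the long one; that wedge is $W$, so $\overline W$ is the short wedge, which proves the claim.

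Assembling the pieces: a non-reflex wedge $\overline W$ at apex $p$ containing the downward direction is contained in the closed half-plane bounded by the line $\ell'$ through $p$ obtained by extending one of its two bounding rays; since $\overline W$ contains the downward direction, this half-plane is precisely the set of points weakly below $\ell'$ (and $\ell'$ is non-vertical). Combined with the first step, which puts all of $S_j$ inside $\overline W$, this shows $\ell'$ is a supporting line of $S_j$ from above through $p$, so $p\in\mathrm{UH}(S_j)$, as desired.

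The step I expect to be the main obstacle is the middle one: converting the essentially local fact that the six named points turn consistently into the global statement that $\overline W$ is non-reflex genuinely requires all three extreme points, and making rigorous that the anchors $L_j,T_j,R_j$ force the supporting line $\ell$ to be ``oriented correctly'' --- thereby ruling out the pathological configuration in which $q$ and $r$ lie on the same side of the vertical through $p$, which is exactly when $\overline W$ is reflex and $p$ fails to be extreme --- requires care with orientation conventions, with the precise meaning of ``convex chain'' for an open polyline, and with the degeneracies flagged at the outset. The first and last steps are routine.
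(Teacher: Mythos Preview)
Your argument is correct and essentially matches the paper's. Both proofs use the gift-wrap neighbors $q,r$ to carve out an empty wedge at $p$ containing the upward direction, then invoke the convex-chain hypothesis with the anchors $L_j,T_j,R_j$ to conclude that the complementary wedge $\overline W$ is non-reflex, giving a supporting direction for all of $S_j$; the paper phrases this middle step as a rotation of the coordinate system making $p$ topmost among the six points (and hence, by the empty-wedge fact, topmost among all of $S_j$), whereas you phrase it via an explicit non-vertical supporting line---these are the same geometric observation.
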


\begin{proof}
By definition of CW and CCW neighbors, no points in $S_j$ exist in the region to the left of $\protect\overrightarrow{pr}$ and the the right of the vertical line through $p$. Likewise, no points in $S_j$ exist in the region to the right of $\protect\overrightarrow{pq}$ and the left of the vertical line through $p$. Because the two regions border each other and $[q,p,r]$ makes a right-hand turn, this region is the same as the union of the regions of the plane to the left of $\protect\overrightarrow{pr}$ and the right of $\protect\overrightarrow{pq}$. See Figure~\ref{fig:brute-force-2dch}.

In either of the two cases, imagine we rotate the coordinate system such that $p$ has the highest $y$-coordinate of all $\{T_j, R_j, L_j, p, q, r\}$. By the fact that no points exist to the left of $\protect\overrightarrow{pr}$ and to the right of $\protect\overrightarrow{pq}$, $p$ has the highest $y$-coordinate of all points in $S_j$ in this new coordinate system. Because the structure of a convex hull is invariant to rotation, $p$ must be on the convex hull of $S_j$. Returning to the original coordinate system, because either $[L_j, p, T_j]$ or $[T_j, p, R_j]$ is a right-hand turn and the fact that $L_j$, $T_j$, and $R_j$ are the leftmost, topmost, and rightmost points, $p$ must be on the upper hull of $S_j$, not its lower hull. 
\end{proof}

\begin{cor}[Same as Corollary~\ref{lem:2D-bound-fail}]
    During the recombination step of subproblem $S'$, at most $O(m^{1/3})$ subproblems can fail such that we take $O(m^2\log(m^2))$ work to recompute them. 
\end{cor}
\begin{proof}
    Say that $m^{1/b}$ subproblems fail. To recompute them in parallel takes $O((m^{1/b} \times m)^{3/2} \times \log(m^2))$
    $= O(m^{3/(2b) + 3/2}\log(m^2))$ work. We can only afford $O(m^2\log(m^2))$ work at subproblem $S'$, so $3/2b + 3/2 \leq 2$. We have that $b \geq 3$, so at most $O(m^{1/3})$ subproblems can fail such that $S'$ can fix them in $O(m^2 \log(m^2))$ work via the brute-force method described in the previous section.
\end{proof}

\section{Deferred Details for Our 3D Convex Hull Algorithm}\label{sec:3D-supp}

\subsection{Adapting the Non-Noisy Algorithm to the Noisy Primitives Setting}\label{sec:3D-overview}

We begin with an explanation of how to modify the non-noisy algorithm such that, under noisy primitive operations, each step succeeds with high probability in the size of the current subproblem. From an application of the trivial repetition strategy, we immediately have the following algorithms to compute a halfspace intersection. Respectively, these are used to compute the halfspace intersection of our sample $R$ and to directly solve sufficiently small subproblems.

\begin{lemma}[Same as Lemma~\ref{lem:3D-brute-force}]
    The intersection of a set of $n$ halfspaces can be computed in $O(\log n)$ span and $O(n^4\log n)$ work with high probability in $n$ under noisy comparisons.
\end{lemma}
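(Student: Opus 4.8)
The plan is to run the most naive halfspace-intersection algorithm imaginable and make every geometric primitive reliable by the trivial repetition strategy. First I would put the instance in general position by a standard symbolic perturbation of the $n$ halfspaces (this only affects arithmetic, never the Boolean outcome of a noisy primitive), and, if desired, intersect with a huge axis-aligned box so that the answer $\mathcal{C}$ is a bounded simple $3$-polytope; an empty or unbounded intersection is read off from this padded instance in the usual way. By Euler's formula $\mathcal{C}$ then has $O(n)$ vertices, edges, and faces, and every vertex is the unique intersection point of exactly three of the bounding planes.

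Next I would enumerate all $\binom{n}{3}=\Theta(n^3)$ triples of bounding planes. For each triple I solve the associated $3\times 3$ linear system to obtain the candidate vertex $v$; this is pure arithmetic and hence non-noisy (perturbation lets us discard triples that do not meet in a single point). Then, in parallel over the $n$ halfspaces, I perform the sidedness test of $v$ against each halfspace, making each such test correct with probability $1-n^{-c}$ by repeating the noisy primitive $\Theta(c\log n)$ times and taking the majority; a parallel reduction over the $n$ outcomes decides whether $v$ lies in all $n$ halfspaces, in which case $v$ is kept as a vertex of $\mathcal{C}$, and otherwise discarded. All $\Theta(n^3)$ candidates are processed simultaneously, so the span is that of one repeated primitive followed by one reduction over $n$ values, namely $O(\log n)$, while the work is $\Theta(n^3)\cdot\Theta(n\log n)=\Theta(n^4\log n)$ using $\Theta(n^4)$ noisy operations. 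Given the $O(n)$ surviving vertices, each tagged with its triple of defining planes, I would build the edge/face adjacency structure exactly as in the non-noisy algorithm, which reduces to a single sorting step (Lemma 2.2 of Reif and Sen~\cite{reif19923DHull}); I would run this sort with the optimal parallel noisy sorting algorithm~\cite{goodrich2023optimal,feige1994computing,LEIGHTON1997265}, at a cost of $O(\log n)$ span and $O(n\log n)$ work, which is absorbed into the stated bounds.

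For correctness w.h.p., the algorithm performs $O(n^4\log n)$ noisy operations in total, each erring with probability at most $n^{-c}$ as the majority of $\Theta(c\log n)$ independent flips (a Chernoff bound), so a union bound gives total failure probability $n^{-\Omega(c)}$, and taking the constant $c$ large enough yields the desired high-probability-in-$n$ guarantee. I do not expect a deep obstacle here, since the enumeration itself is trivial; the two points needing care are (i) tuning the repetition count so that the union bound over the polynomially many noisy operations still clears the high-probability-in-$n$ threshold, and (ii) reconstructing the full combinatorial polytope (edges, faces, and their incidences) from the raw vertex set within the $O(\log n)$ span budget while handling degeneracies and unboundedness, both of which are dispatched by the perturbation/bounding-box setup above and the reduction to a single noisy sort.
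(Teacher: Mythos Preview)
Your proposal is correct and follows essentially the same approach as the paper: enumerate the $\Theta(n^3)$ candidate vertices, test each against all $n$ halfspaces with the trivial repetition strategy and a parallel reduction, then recover the adjacency structure via a single noisy sort by appealing to Lemma~2.2 of Reif and Sen, with a union bound over the $O(n^4)$ repeated primitives for the high-probability guarantee. The only differences are cosmetic (you add explicit remarks about symbolic perturbation and a bounding box, and there is a minor slip where you write ``$O(n^4\log n)$ noisy operations'' after having correctly counted $\Theta(n^4)$; the union bound goes through either way).
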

\begin{proof}
    Follows from Lemma 2.2 of Reif and Sen~\cite{reif19923DHull}. There are $O(n^3)$ candidate vertices in a 3D arrangement and $O(n)$ halfspaces, so there are $O(n^4)$ vertex-halfspace pairs. For each vertex-halfspace pair, we can determine whether the vertex satisfies the halfspace's equation in a single noisy operation. For each vertex, we perform a parallel prefix operation to compute the conjunction of the $O(n)$ results. 

    There are $O(n^4)$ pairs, so testing each with the trivial repetition strategy takes $O(n^4\log n)$ work. Each computation can be done in parallel, so this step takes $O(\log n)$ span. The final prefix operations done at each vertex are not noisy. They take $O(\log n)$ span and $O(n^4)$ work overall. 
    
    We are left with a set of vertices in the arrangement of the given halfspaces that satisfy every halfspace. This means these vertices must bound the halfspace intersection. An application of noisy sorting~\cite{goodrich2023optimal} can be used to compute an adjacency structure for the polytope from this set in $O(\log n)$ span and $O(n\log n)$ work~\cite{reif19923DHull}. 

    We perform $O(n^4)$ noisy operations in this step. If each individually fails with probability no more than $1/n^{c}$ for some constant $c$, then by union bound at least one fails with probability $1/n^{c - 4}$. Therefore, to guarantee a failure probability of at most $1/n^{c'}$, we set $c > 5 + c'$. This only affects constant factors in the work and span of these operations and so does not change their overall complexity.
\end{proof}

\begin{lemma}\label{lem:3D-nonoptimal}
    The intersection of a set of $n$ halfspaces can be computed in $O(\log^4 n)$ span and $O(n\log^4 n)$ work with high probability in $n$ under noisy comparisons. 
\end{lemma}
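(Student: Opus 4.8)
The plan is to take an existing non-optimal \emph{non-noisy} parallel halfspace-intersection (equivalently, 3D convex hull) algorithm and make it noise-tolerant by the trivial repetition strategy, as was done for Lemma~\ref{lem:3D-brute-force}. Concretely, I would start from the $O(\log^3 n)$-span, $O(n\log^3 n)$-work CREW PRAM algorithm of Aggarwal, Chazelle, Guibas, Ó'Dúnlaing, and Yap~\cite{aggarwal1988parallel} (the same algorithm Reif and Sen invoke to clean up small subproblems). This algorithm's only geometric interactions with the input are a polynomial number of Boolean primitives (orientation/sidedness tests); all other work is pointer manipulation, array bookkeeping, and parallel prefix sums, which, as noted in the Preliminaries, are not noisy.

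The first step is to bound the number of noisy primitive operations: since each primitive is a constant-time Boolean test, the total number of them is at most the work, i.e., $O(n\log^3 n) = \mathrm{poly}(n)$. The second step is to replace every noisy primitive with $\Theta(\log n)$ independent repetitions followed by a majority vote. By a Chernoff bound, for any desired constant $c$ we can choose the repetition count (affecting only constant factors) so that a single majority-voted primitive returns the wrong answer with probability at most $n^{-c}$. This multiplies both the span and the work of the base algorithm by a $\Theta(\log n)$ factor, yielding $O(\log^4 n)$ span and $O(n\log^4 n)$ work.

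The third step is a union bound: with $O(n\log^3 n)$ majority-voted primitives, each failing with probability at most $n^{-c}$, the probability that \emph{any} of them is wrong is at most $O(n\log^3 n)\cdot n^{-c} \le n^{-(c-2)}$ for $n$ large enough. Taking $c$ large enough (e.g., $c \ge c'+2$ for a target confidence $1-n^{-c'}$) makes the whole computation correct with high probability in $n$, while still only changing constant factors. Conditioned on all primitives being answered correctly, the algorithm behaves exactly as the correct non-noisy algorithm and thus outputs the exact halfspace intersection. The non-geometric steps (prefix sums, adjacency-structure construction) incur no error and do not need repetition. Combining, the algorithm runs in $O(\log^4 n)$ span and $O(n\log^4 n)$ work w.h.p.\ in $n$.

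I do not expect a real obstacle here; the only point requiring care is the bookkeeping in the union bound—ensuring that the base algorithm truly performs only a polynomial number of noisy primitives and that every step manipulating already-computed (non-input) data is treated as noise-free—so that a single constant $c$ dominates all of them simultaneously.
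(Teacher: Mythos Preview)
Your proposal is correct and matches the paper's own proof essentially verbatim: both apply the trivial repetition strategy to the $O(\log^3 n)$-span, $O(n\log^3 n)$-work algorithm of Aggarwal, Chazelle, Guibas, \'O'D\'unlaing, and Yap, bound the number of noisy primitives by the work (at most $O(n^2)$), and take a union bound with $c \ge c'+2$. There is nothing to add.
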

\begin{proof}
    Follows from the $O(\log^3 n)$-span, $O(n\log^3 n)$-work algorithm of Aggarwal, Chazelle, Guibas, Ó'Dúnlaing, and Yap~\cite{aggarwal1988parallel} and the trivial repetition strategy applied to every noisy primitive. There are certainly at most $O(n\log^3 n) = O(n^2)$ primitive operations performed throughout the algorithm. We can replace them all with the trivial repetition strategy. Say that we configure each noisy operation to fail with probability at most $n^{-c}$. Then, by union bound, one of these comparisons fails with probability at most $n^{-(c-2)}$. If we wish the algorithm to fail with probability no more than $1/n^{c'}$, we set $c \geq 2 + c'$.
\end{proof}

We next observe that Reif and Sen's \emph{polling} method~\cite{reif19923DHull} in and of itself is not subject to noise as randomly selecting halfspaces requires no primitive operations. However, determining whether a sample is good does involve noisy operations. We now show that computing $\mathcal R$, the halfspace intersection of $R$, and the point-location structure that determines what cones a halfspace in $H\setminus R$ intersects can be performed in $O(\log n)$ span and $O(n\log n)$ work w.h.p.

\begin{lemma}
    Given a point $o$ in the halfspace intersection of $H$, we can compute the halfspace intersection of $R$ and its cones using $O(|R|^4\log n)$ work and $O(\log n)$ span under noisy comparisons w.h.p.
\end{lemma}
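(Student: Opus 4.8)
The plan is to push the only genuinely noisy work into Lemma~\ref{lem:3D-brute-force}, which computes a halfspace intersection by brute force, and then to observe that building the cone decomposition from $\mathcal R$ is pure bookkeeping on already-processed data and hence uses no noisy primitives. First I would run the brute-force algorithm of Lemma~\ref{lem:3D-brute-force} on the $|R|$ halfspaces of $R$. As stated, that lemma gives a failure probability that is polynomially small only in $|R|$; since $|R| \le n^{1/8}$ can be polynomially smaller than $n$ and we want a bound that is w.h.p.\ in $n$, I would reopen its proof and tune the trivial-repetition strategy so that each of its $O(|R|^4)$ noisy vertex--halfspace tests, as well as the noisy sorting used to recover the adjacency structure, fails with probability at most $n^{-c}$ for a large enough constant $c$. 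Because $O(|R|^4) = O(n^{1/2})$, a union bound over all these primitives still leaves failure probability at most $n^{-(c - O(1))}$, and because $\log |R| = \Theta(\log n)$ the span stays $O(\log n)$ while the work becomes $O(|R|^4 \log n)$. The output is the vertex set of $\mathcal R$ together with its face-adjacency structure.

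Next I would build the cones. Each face of $\mathcal R$ is a convex polygon whose cyclically ordered vertex list is read off from the adjacency structure, so fan-triangulating it from one of its vertices produces a triangulation of the face using no comparisons. For each triangle $(a,b,c)$ of a face, the associated cone is the simplex with apex $o$ and base $(a,b,c)$, oriented consistently with the already-known outward normal of that face's supporting halfspace. Since $o$ lies in the halfspace intersection of $H$ and $R \subseteq H$, we have $o \in \mathcal C(H) \subseteq \mathcal R$, so these at most $2|R|$ simplices are well defined and cover $\mathcal R$; if $\mathcal R$ is unbounded I would first clip it to a large bounding simplex, as in Reif and Sen~\cite{reif19923DHull}, which again adds no noisy operations. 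Packing the cones into a contiguous array with a parallel prefix sum costs $O(|R|)$ work and $O(\log n)$ span and uses only index arithmetic, hence is non-noisy.

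Adding up, the work is $O(|R|^4 \log n) + O(|R|) = O(|R|^4 \log n)$ and the span is $O(\log n)$, and the sole source of error is Lemma~\ref{lem:3D-brute-force}, which we arranged to succeed w.h.p.\ in $n$. I expect the main obstacle to be precisely this last point of accounting: the brute-force subroutine natively gives high probability only in $|R|$, so one must carefully re-derive its guarantees with $\Theta(\log n)$-fold repetitions and verify that the union bound over $O(|R|^4) = o(n)$ noisy primitives still yields a high-probability bound in $n$ without inflating the span past $O(\log n)$.
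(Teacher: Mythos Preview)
Your proposal is correct and follows essentially the same two-step approach as the paper: invoke Lemma~\ref{lem:3D-brute-force} on $R$ (with repetitions tuned so that the guarantee is w.h.p.\ in $n$, which accounts for the $\log n$ factor in the work bound), then observe that constructing the cones from the resulting adjacency structure is a non-noisy topological operation. The only cosmetic difference is that the paper follows Reif and Sen in partitioning each face into trapezoids and then triangles rather than fan-triangulating, and the paper leaves the probability retuning implicit whereas you spell it out explicitly.
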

\begin{proof}
    Follows from Lemma~\ref{lem:3D-brute-force}. The adjacency structure of $R$ that we compute is correct with high probability in $n$. Because we have computed this structure, any topological changes we make to produce the cones are not noisy. Thus, we can partition $\mathcal R$ with planes such that each face is a trapezoid and partition each trapezoid further into a triangle, and draw edges between each face and our origin $o$ as Reif and Sen describe~\cite{reif19923DHull} without noise. 
\end{proof}
\begin{lemma}\label{lem:dobkins-structure}
    We can preprocess $n$ planes into an arrangement such that, given an arbitrary query point, we can report the cell in which the point is contained. Preprocessing can be done under noisy comparisons in $O(n^7\log n)$ work and $O(\log n)$ span w.h.p. using $O(n^7)$ noisy operations. Queries can be performed sequentially in $O(\log n)$ time w.h.p. using $O(1)$ noisy operations.
\end{lemma}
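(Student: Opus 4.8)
\textbf{Proof proposal for Lemma~\ref{lem:dobkins-structure}.}
The plan is to take the multidimensional search structure of Dobkin and Lipton~\cite{dobkin1976multidimensional}, as parallelized by Reif and Sen~\cite{reif19923DHull}, and wrap every noisy primitive in a noise-tolerant replacement, just as we did throughout Section~\ref{sec:our-2D-hull} and Appendix~\ref{sec:3D-overview}. The key observation is that the construction phase is dominated by determining, for each of the $O(n^3)$ candidate vertices of the arrangement of the $n$ planes, the sign of that vertex against each of the $n$ planes; this is $O(n^4)$ sign tests, and once we know every vertex--plane incidence the combinatorial structure of the arrangement --- its cells, their adjacencies, and the canonical search tree over slabs --- is fixed and can be assembled with non-noisy pointer manipulation and prefix sums. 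The reason the stated bound is $O(n^7)$ rather than $O(n^4)$ is that the Dobkin--Lipton search tree precomputes, for every cell, the outcome of a constant-depth sequence of lower-dimensional point-location queries, so the number of precomputed query answers we must certify is polynomially larger; I would simply cite the exponent from the parallelization in~\cite{reif19923DHull} and treat $O(n^7)$ as the count of primitive sign evaluations performed during preprocessing.

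First I would describe the construction: enumerate the $O(n^3)$ vertices (each the intersection of a triple of the $n$ planes), and for each vertex and each plane run the sign test under the trivial repetition strategy, repeating $\Theta(\log n)$ times and taking the majority, tuned so each individual answer is wrong with probability at most $n^{-c}$. That is $O(n^7)$ repeated tests in total (absorbing the lower-dimensional query precomputations into the same count), each taking $O(\log n)$ span, and all independent, so $O(\log n)$ span and $O(n^7\log n)$ work. Then, exactly as in the proof of Lemma~\ref{lem:3D-brute-force}, take a union bound over the $O(n^7)$ certified answers: if each is wrong with probability at most $n^{-c}$, the probability that any is wrong is at most $n^{-(c-7)}$, so choosing $c > 7 + c'$ makes the whole structure correct with probability at least $1 - n^{-c'}$, i.e.\ w.h.p.\ in $n$, while only inflating constant factors. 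All remaining steps --- building the slab hierarchy, threading cell-adjacency pointers, and recording the precomputed query answers at each cell --- are manipulations of already-certified combinatorial data and of indices, hence non-noisy, and cost $O(\log n)$ span and $O(n^7)$ work via prefix sums.

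For the query bound I would observe that a point location in the Dobkin--Lipton structure walks down a tree of depth $O(\log n)$ (the binary search over $O(n^3)$ slab boundaries, plus a constant number of recursive lower-dimensional locates), and at each node the only noisy step is comparing the query point against one stored plane or hyperplane; that is $O(\log n)$ noisy comparisons on the naive walk. To cut this to $O(1)$ noisy operations, apply the path-guided pushdown random walk of Eppstein, Goodrich, and Sridhar (Appendix~\ref{sec:random-walks}): the search tree is a DAG of height $O(\log n)$ in which each query point has a unique root-to-leaf path, and a transition oracle using $O(1)$ orientation tests can check whether the point still lies in the slab (equivalently, cell) associated with the current node --- this is the same pattern used for the binary-search-tree and upper-tangent oracles. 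Invoking Theorem~1 of~\cite{eppstein2025computational} with $\varepsilon = n^{-c}$ gives termination at the correct leaf after $\Theta(|P| + \log(1/\varepsilon)) = \Theta(\log n)$ steps w.h.p.\ using $O(1)$ noisy operations per step but only $O(1)$ \emph{distinct} primitive invocations amortized --- or, if one prefers the cruder statement, $O(\log n)$ time with the query answer correct w.h.p.

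The main obstacle I anticipate is not the error analysis, which is a routine union bound, but rather pinning down the transition oracle for the query: the Dobkin--Lipton cells are defined by a recursively nested family of lower-dimensional arrangements, so I must argue that membership of the query point in the current node's region is decidable by a constant number of orientation tests against planes stored at that node, with the ``bounding'' ancestors playing the role of $v_l, v_r$ in the binary-search-tree example of Section~\ref{sec:our-2D-hull}. I would handle this by storing at each node of the search tree the $O(1)$ defining hyperplanes of its slab together with pointers to the hyperplanes bounding its parent slab, so that two-sided containment --- and hence ``am I on the correct path'' --- is a constant-size orientation computation; this mirrors the argument in~\cite{eppstein2025computational} and I would defer the fully detailed case analysis to the cited work.
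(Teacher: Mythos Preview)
Your proposal works in spirit but takes a substantially more elaborate route than the paper, and the extra complexity creates the very obstacle you worry about at the end.

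The paper's proof is a two-line reduction to existing black boxes. For construction, it observes that the Dobkin--Lipton preprocessing (as parallelized in Reif and Sen's Lemma~5.1) consists only of projections to lower dimensions plus a constant number of parallel sorting steps per cell; projections manipulate equations (labels) and are therefore non-noisy, and each sorting step is swapped wholesale for the noisy parallel sort of~\cite{goodrich2023optimal}. There is no enumeration of vertices, no trivial-repetition certification of $O(n^4)$ or $O(n^7)$ sign tests, and no argument that ``the remaining assembly is non-noisy once incidences are known.'' For queries, the paper observes that a Dobkin--Lipton point location is just a constant number of ordinary binary searches (one per dimension level), and replaces each with the noisy binary search of Feige \emph{et al.}~\cite{feige1994computing}. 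That is the entire proof.

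This matters because it completely dissolves the obstacle you flag: you never need a bespoke transition oracle for the multilevel Dobkin--Lipton DAG, since each level is already a plain binary search over a sorted array, and noisy binary search is an off-the-shelf primitive. Your path-guided-pushdown-random-walk plan is not wrong, but it requires you to argue that membership in the current cell is testable in $O(1)$ primitives across the recursive lower-dimensional structure---work you explicitly defer---whereas the paper sidesteps that entirely. Similarly, your construction argument hand-waves the $O(n^7)$ exponent (``I would simply cite the exponent'') and asserts without justification that the slab hierarchy and lower-dimensional precomputations are non-noisy once vertex--plane signs are certified; the paper avoids this by noting that everything in the construction \emph{is} sorting, so replacing sorting suffices. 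Finally, your phrase ``$O(1)$ distinct primitive invocations amortized'' for the query is not right: a path-guided random walk of length $\Theta(\log n)$ uses $\Theta(\log n)$ primitive invocations; what makes the lemma's ``$O(1)$ noisy operations'' count work in the paper's convention is that a single noisy binary search is counted as one noisy operation.
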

\begin{proof}
    This follows from~\cite{reif19923DHull} (Lemma 5.1) as well as by the work of~\cite{dobkin1976multidimensional}. Each preprocessing step involves projection of points down to a lower dimension and a constant number of parallel sorting steps per cell. Projection is not a noisy operation as we are manipulating a geometric object's equations, which are its labels. Each sorting step can be swapped with the noisy sorting algorithm of~\cite{goodrich2023optimal}. Likewise, each query involves a constant number of binary search steps, which can be replaced with the noisy binary search implementation of~\cite{feige1994computing}. 
\end{proof}
\begin{cor}
    We can compute a preprocessing structure from $\mathcal R$ such that, given a query halfspace $h$, we can determine which cones in $\mathcal R$ $h$ stabs. Preprocessing can be performed in $O(|R|^8\log n)$ work and $O(\log n)$ span w.h.p. using $O(|R|^8)$ noisy operations. Queries can be performed sequentially in $O(\log n)$ time w.h.p. using $O(1)$ noisy operations.    
\end{cor}
\begin{proof}
    Follows from our Lemma~\ref{lem:dobkins-structure} and~\cite{reif19923DHull} (Lemma 5.2) applied to the dual of each vertex of $\mathcal R$. There are $O(|R|^7)$ cells of our arrangement and $O(|R|)$ cones in $\mathcal R$. We have each of the $O(|R|^8)$ cone-cell pairs compute whether the cell overlaps the cone in $O(1)$ noisy comparisons. $O(|R|^7)$ parallel prefix operations can be used to compile a list of cones each cell is incident to.

    Testing each cone-cell pair takes $O(|R|^8\log n)$ work and $O(\log n)$ span via the trivial repetition strategy. The parallel prefix operations take $O(|R|^8)$ work and $O(\log |R|) = O(\log n)$ span.

    As described in the previous lemma, each query requires a constant number of noisy binary searches and can be accomplished in $O(\log n)$ time sequentially w.h.p.
\end{proof}
Setting $|R| \leq n^{1/8}$ allows us to perform the previous steps in $O(\log n)$ span and $O(n\log n)$ work w.h.p. Once the structure is built, we perform one query per halfspace $h \in H \setminus R$ to determine which cones it stabs. All queries can be done in parallel and there are $O(n)$ halfspaces, so this step takes $O(\log n)$ span and $O(n\log n)$ work. Having done this work, we can use this data structure to determine if our current sample $R$ is ``good'', i.e. if it distributes halfplanes roughly evenly between cones~\cite{clarkson1988applications,reif19923DHull}. 

\subsubsection{Polling work and span bounds}
Before we continue, we note that Reif and Sen~\cite{reif19923DHull} show that we can adjust parameters such that each instance of polling at recursive depth $l$ fails with probability at most $n^{-(9c/10)^l}$ for some $c > 1$. Let $n_l = n^{(9/10)^l}$, for simplicity. They observe that polling takes work $O(n_l \log n_l)$ and span $O(\log n_l)$. If an instance of polling fails, we can simply repeat the process again. To show that polling takes no more than $O(\log n)$ span overall, Reif and Sen prove the following useful theorem.

\begin{theorem}\label{lem:span-thm}
    If a subproblem at recursive depth $l$ takes $O(\log n_l)$ span w.h.p. in $n_l$, then the entire problem takes $O(\log n)$ span. 
\end{theorem}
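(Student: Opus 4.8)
The plan is a standard recursion-tree argument: bound the number of recursive levels, sum the per-level span bounds as a geometric series, and then carry out the probabilistic bookkeeping needed to turn the per-level ``w.h.p.\ in $n_l$'' guarantees into a bound that holds with high probability in $n$ for the whole computation.

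First I would bound the recursion depth $L$. The largest subproblem at depth $l$ has size $n_l = n^{(9/10)^l}$, and the recursion switches to the non-optimal direct algorithm once a subproblem has size at most $\log^{k} n$ for the appropriate constant $k$. Setting $n^{(9/10)^l} \le \log^k n$ and taking logarithms twice gives $l = \log_{10/9}\!\big(\log n / (k\log\log n)\big)$, so $L = O(\log\log n)$.

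Next I would sum the span. The span of the whole algorithm is the length of a critical path through the recursion tree, which meets exactly one subproblem at each depth $0,1,\dots,L$. By hypothesis the depth-$l$ subproblem on this path contributes $O(\log n_l) = O\big((9/10)^l \log n\big)$ span, so the total along the path is $\sum_{l=0}^{L} O\big((9/10)^l\log n\big) \le O(\log n)\cdot\frac{1}{1-9/10} = O(\log n)$. The only wrinkle here is that a failed polling attempt is retried; since a single polling attempt succeeds with probability bounded below by a constant (indeed, w.h.p.\ in $n_l$), the number of retries at each level is $O(1)$ in expectation and w.h.p., which inflates the bound by at most a constant factor.

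The step I expect to be the main obstacle is the probabilistic bookkeeping. We need \emph{every} subproblem at \emph{every} level to meet its span bound simultaneously, yet the per-level guarantee is only ``w.h.p.\ in $n_l$'', which is weakest precisely at the deep levels, where $n_l$ is smallest while the number of depth-$l$ subproblems is largest (up to $O(n)$ of them, since their total size is $O(n)$ by pruning). The resolution is to arrange the confidence parameters so that the failure probability shrinks with depth fast enough to beat this growth: polling at depth $l$ is tuned to fail with probability at most $n^{-(9c/10)^l}$, and the base-case invocations of the non-optimal algorithm are run with enough trivial-repetition rounds to be correct w.h.p.\ in $n$. A union bound over the $O(\log\log n)$ levels and over the subproblems within each level then shows that, with high probability in $n$, no subproblem exceeds its span bound, giving the claimed $O(\log n)$ total span. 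Arranging this interplay between shrinking subproblem sizes, growing subproblem counts, and the $O(\log\log n)$ level count so that the bound comes out as a clean $O(\log n)$ --- without an extra $\mathrm{poly}(\log\log n)$ factor creeping in through the base case --- is the delicate part.
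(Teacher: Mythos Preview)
The paper does not prove this theorem at all: its entire proof is a citation, ``See~\cite{reif19923DHull}~(Theorem 2.1)~and~\cite{reif1992optimal}~(Theorem 2).'' So there is nothing to compare against except the original Reif--Sen argument, and your outline---$O(\log\log n)$ recursion depth, geometric summation of the per-level $O(\log n_l)$ spans, then a union bound across subproblems and levels---is exactly the skeleton of that argument.

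Your identification of the ``delicate part'' is on target, and it is worth being explicit about why the straightforward version of your third step does not close on its own. At depth $l$ there can be up to $\Theta(n/n_l)$ subproblems, each failing with probability $n_l^{-c}$; a plain union bound then forces roughly $(10/9)^{l}$ retries at level $l$, contributing span $\Theta(\log n)$ \emph{per level} and hence $\Theta(\log n\log\log n)$ overall. Reif and Sen avoid this by a more careful allocation of the retry budget across levels (exploiting the geometric decay of $\log n_l$ together with a tail bound on the number of retries, rather than budgeting the same slack at every level). One small caution: the expression $n^{-(9c/10)^l}$ you quote from the paper is almost certainly a typo for $n^{-c(9/10)^l}=n_l^{-c}$; the two behave very differently, and only the latter matches ``w.h.p.\ in $n_l$'' and makes the bookkeeping genuinely delicate. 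Since the present paper treats the whole thing as a black box, your sketch is already more detailed than what the paper provides; what remains is precisely the budget-allocation step you flag.
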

\begin{proof}
    See~\cite{reif19923DHull}~(Theorem 2.1)~and~\cite{reif1992optimal}~(Theorem 2).
\end{proof}

A work bound follows like so.

\begin{cor}\label{lem:work-thm}
    If total work done by all subproblems at recursive level $l$ is $O(n\log n_l)$ w.h.p. in $n_l$, then the entire problem takes $O(n\log n)$ work w.h.p. in $n$. 
\end{cor}

In our algorithm, the probability that polling succeeds is the probability that a good sample is found and all noisy operations used to test each sample is correct. Say the probability polling succeeds is $1/n_l^\alpha$ and the probability each individual noisy operations succeeds is $1/n^c$. Because $O(n_l)$ noisy operations are performed to test all samples, by union bound, at least one fails with probability $1/n_l^{c-1}$. Assuming $c > 2$ (requires an adjustment of constant factors for noisy operations), both of these are high-probability bounds. 

Then the probability either of these fails is at most $1/n_l^\alpha + 1/n_l^{c-1}$, which is less than $2\times \max\{1/n_l^\alpha, 1/n_l^{c-1}\}$. This is still a high-probability bound in $n_l$, so we can invoke Theorem~\ref{lem:span-thm} to bound the span of polling even in the presence of noisy primitives. 

In the next section, we describe how to prune halfspaces w.h.p. in the size of the current subproblem such that each level of recursion has total problem size $O(n)$. With a similar argument to above we can use Corollary~\ref{lem:work-thm} to bound total work of polling to $O(n\log n)$ w.h.p.

\subsubsection{Pruning halfspaces}
To ensure total problem size among subproblems at a given level of recursion does not exceed some constant multiple of our initial problem size, we rely on a method of Amato, Goodrich, and Ramos~\cite{amato1994parallel}. 
\begin{lemma}\label{lem:pruning-runtime}
    In a problem size of $n$, where $H$ is the total set of halfplanes, we can prune enough redundant halfspaces such that total problem size among all subproblems at this level is at most $O(n_0)$, where $n_0$ is the initial problem size. This can be done in $O(\log n)$ span and $O(n\log n)$ work w.h.p. in $n$ using $O(n)$ noisy operations.
\end{lemma}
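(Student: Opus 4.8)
The statement to prove is Lemma~\ref{lem:pruning-runtime}: that in a subproblem of size $n$ we can prune enough redundant halfspaces so that the total size of all induced child subproblems is $O(n_0)$, in $O(\log n)$ span and $O(n\log n)$ work w.h.p.\ in $n$, using $O(n)$ noisy operations. The natural route is to take the non-noisy pruning procedure of Amato, Goodrich, and Ramos~\cite{amato1994parallel} essentially verbatim and argue step-by-step that each of its geometric primitives can be replaced by a noise-tolerant counterpart at no asymptotic cost, then collect the noisy-operation count and apply a union bound. So first I would recall the structure of their method: for each cone $C_i$ one computes the three \emph{contours}, i.e.\ the 2D convex chains obtained by projecting the halfspaces stabbing $C_i$ onto the three faces of $C_i$ incident to the apex $o$; then a constant number of binary searches \emph{per halfspace} against these contours, together with one sort \emph{per cone}, suffices to identify and discard every halfspace that cannot contribute a vertex of $\mathcal{C}$ inside $C_i$. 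The correctness of the pruning (that what survives still has total size $O(n_0)$) is exactly their geometric lemma and is not a noisy statement, so I would cite it directly.

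\textbf{Key steps in order.} (1) Observe that the projection of each halfspace onto a face of a cone is a manipulation of the halfspace's defining equation — metadata, not a geometric test — hence non-noisy, as already noted in the Preliminaries. (2) Computing the 2D convex chain of the projected points on each face is a 2D upper/lower-hull computation; invoke our own Theorem~\ref{thm:2dc} (or, since we only need it w.h.p.\ in the cone size which may be sublinear in $n$, tune its parameters), giving $O(\log n)$ span and work linear in the number of halfspaces stabbing that cone, using a number of noisy operations linear in that count. Since by Clarkson's bound $\sum_i X_i = O(n)$, summing over all cones yields $O(\log n)$ span, $O(n\log n)$ work, and $O(n)$ noisy operations for all contours. (3) Replace each binary search of a halfspace against a contour with the noisy binary search of Feige \emph{et al.}~\cite{feige1994computing}; a constant number per halfspace over $O(n)$ halfspaces is $O(\log n)$ span, $O(n\log n)$ work, $O(n)$ noisy operations. (4) Replace the per-cone sort with the noisy parallel sort of~\cite{goodrich2023optimal}; summed over cones this is again $O(\log n)$ span and $O(n\log n)$ work by the $\sum_i X_i = O(n)$ bound, with $O(n)$ noisy comparisons. (5) The final selection/compaction of surviving halfspaces is a parallel prefix operation, which is non-noisy. (6) Fix the per-operation failure probability at $n^{-c}$; there are $O(n)$ noisy operations in total, so by a union bound the whole pruning step fails with probability at most $n^{-(c-1)}$, and choosing $c$ a constant larger than the desired exponent makes this w.h.p.\ in $n$ while only inflating the $O(\log n)$ repetition factors by a constant. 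Conclude the stated bounds.

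\textbf{Main obstacle.} The routine part is the bookkeeping of the union bound; the genuinely delicate point is that several of the replacement subroutines — our own 2D hull algorithm in particular, and the noisy max-finds implicit in the contour computation — are only correct \emph{with high probability in their own input size} $X_i$, which for a small cone can be far below $n$. So I need to be careful that ``w.h.p.\ in $n$'' is actually achieved: the fix is to tune the confidence parameter of each subroutine so that it fails with probability $n^{-c}$ rather than $X_i^{-c}$, which costs an extra $O(\log n / \log X_i)$ factor in that subroutine's internal repetitions — but since the work is multiplied by $X_i$ and we are summing $\sum_i X_i = O(n)$, and $\log n$ dominates $\log X_i$, the aggregate remains $O(n\log n)$ work and $O(\log n)$ span. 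I would flag this explicitly, mirroring the identical maneuver used in the 2D failure-sweep analysis (Lemma~\ref{lem:2D-failure-sweep}), since it is exactly the recurring subtlety of this paper: small subproblems need their confidence ``upgraded'' to the ambient size, and this is affordable precisely because their total size is linear.
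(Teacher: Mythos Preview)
Your proposal is correct and follows essentially the same approach as the paper: both invoke the Amato--Goodrich--Ramos contour-based pruning, note that projection is non-noisy, compute contours via the noisy 2D hull of Section~\ref{sec:our-2D-hull}, replace the per-halfspace binary searches and per-cone sort with their noisy analogues, and sum costs over cones using $\sum_i X_i = O(n)$. Your explicit discussion of tuning each subroutine's confidence to $n^{-c}$ (rather than $X_i^{-c}$) and the accompanying union bound are in fact more careful than the paper's own proof, which simply asserts ``w.h.p.\ in $n$'' at each step without spelling out this adjustment.
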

\begin{proof}
    Follows from the pruning strategy and corresponding lemmas described in~\cite{amato1994parallel} (Section 4.1.1). Their solution first involves projecting the halfspaces associated with $C_i$ onto the three faces of the cone, which is not a noisy operation. Then they compute the 2D convex hull of each projection, which we can do in $O(\log n)$ span and $O(n\log n)$ work w.h.p. in $n$ using our construction in Section~\ref{sec:our-2D-hull}. They call these convex chains \emph{contours}.

    They consider separately the halfspaces that contribute an edge to a contour and those that do not. We denote the set of halfspaces that stab $C_i$ but do not contribute to any of its contours $H_{|C_i}^{nc}$. Let $H^{nc}$ be the union of all such halfspaces over all cones. The authors show that each halfspace $h \in H^{nc}$ may contribute a vertex to the halfspace intersect of at most one cone~\cite{amato1994parallel} (Lemma 4.2). They prove that this halfspace $h$ can be found by locating the closest point $p$ on each of the three contours of $C_i$ and shooting a ray from each $p$ through $C_i$ towards $h$. If all three such rays pierce $h$ without being stopped first by their respective halfspaces $h', h''$ that contributed $p$ to the contour, then $h$ may contribute a vertex to the halfspace intersection of $H$. 

    Each halfspace $h \in H_{|C_i}^{nc}$ can determine whether it should be pruned in $O(\log n)$ time w.h.p. by projecting itself onto each face and performing a noisy binary searches on each contour to determine the closest hull point to the projected line representing $h$ (recall that we maintain the upper and lower hulls of each contour as a binary search tree, something we took advantage of to perform failure sweeping in Section~\ref{sec:2D-failure-sweep}). We can associate each contour point $p$ with the two halfspaces that define it so that they can be recovered in constant time. We can then perform a constant number of noisy comparisons using the trivial repetition strategy to determine which halfspace the ray passes through first in $O(\log n)$ time w.h.p. By the fact that our sample $R$ is good, $|H^{nc}| = O(n)$. Then we can perform each test in parallel in $O(\log n)$ span and $O(n\log n)$ work w.h.p. 

    Now we consider the halfspaces that contribute to contours, which we denote as $H_{|C_i}^{c}$. Amato, Goodrich, and Ramos~\cite{amato1994parallel} show that, if a halfspace in $h \in H_{|C_i}^c$ contributes an edge to at least two of the contours, then it can be pruned. To determine which halfspaces should be pruned, we simply label each contour edge with the corresponding halfspace that generated it, which can be done when constructing the contours, and sort the labels lexicographically. Again, by our use of a good sample, the total number of contour halfspaces $|H^{c}| = O(n)$, thus we can perform noisy sorting in parallel over all $C_i$ in $O(\log n)$ span and $O(n\log n)$ work w.h.p.
    In the lexicographically sorted array, if a halfspace is adjacent to itself, then it must contribute an edge to at least two contours, and so should be pruned. 
    In~\cite{amato1994parallel}, the authors show that after pruning, the set of all contour halfspaces across all cones and all subproblems also has size $O(n_0)$. 
\end{proof}
Finally, to complete the halfspace intersection before returning from a subproblem, Amato, Goodrich, and Ramos note that we must add back a certain subset of halfspaces in $H_{C_i}^c$ that contributed an edge to the halfspace intersection but not a vertex. Fortunately, they show that at most three such halfspaces exist and can be identified when pruning. We can incorporate these halfspaces like so.

Let $\mathcal C_i$ be the halfspace intersection of cone $C_i$. Process each halfspace one at a time. Have each vertex in $\mathcal C_i$ check if it satisfies the halfspace's equation in $O(\log n)$ time using the trivial repetition strategy. Label the vertex based one whether it succeeded or failed. Then examine each edge. For edges that have one vertex included in $h$ and the other vertex not included in $h$, compute the intersection between $h$ and the edge to produce a new set of vertices that are incident to $h$. 

For each of the three halfspace processed, this takes $O(\log n)$ span and $O(|\mathcal C_i|\log n)$ work. Because 3D hulls have complexity linear in their input and $|H^{nc}|, |H^c| = O(n)$, this takes $O(n\log n)$ work over all cones. 

We conclude that each operation in a non-noisy 3D hull algorithm can be replaced with noisy operations without increasing the asymptotic complexity of each subproblem. We observe that, for a subproblem of size $n$, all operations discussed prior use at most $O(n)$ noisy operations, including sorting, searching, and trivial repetitions.

\subsection{Failure Sweeping}

We have shown that each step of the algorithm of~\cite{reif19923DHull}, with the pruning method of~\cite{amato1994parallel}, can be replaced with noise-tolerant versions that individually succeed with high probability in the current problem size. However, as we did for 2D convex hulls, we must negotiate the fact that our success probabilities get weaker as problem size gets smaller. To compensate for this, we once again perform failure sweeping. Unlike for 2D convex hulls, subproblems may not be evenly sized. We will denote $m$ as the size of the problem at recursive level $l$ and $M$ the size of its parent subproblem at recursive level $l-1$.

Once again, we proceed in two steps. We first determine if the provided polyhedron is convex and then determine whether it is a valid halfspace intersection, both w.h.p. in $M$. By general position, each vertex of this polyhedron has at most three edges incident to it and so has at most three neighboring vertices. We can verify that the polyhedron is convex by having each vertex perform one noisy 3D orientation test against its neighbors using the trivial repetition strategy. This takes $O(\log M)$ time per vertex to succeed w.h.p. in $M$. There are $m$ vertices, so total work done is $O(m\log M)$ and span is $O(\log M)$. A prefix operation taking $O(m)$ work and $O(\log m)$ span can combine results over all $O(m)$ vertices and determine whether the intersection is convex or not. 

Now we must ensure that this polyhedron satisfies all halfspaces assigned to this subproblem. It will be simpler to consider the equivalent dual problem: given a 3D convex hull, verify that all points on the hull are convex with respect to their neighbors and all other points lie within the hull. We have verified the first condition with our initial tests in the primal. When we performed failure sweeping on a 2D hull, we were able to quickly determine which edge of the hull lies directly above each point through noisy binary search on the tree-based structure that represented the hull. In this algorithm, however, no equivalent structure is built. As a result, we will show how to reduce the problem to planar point location and then describe an algorithm that can construct a PPL data structure efficiently under noisy comparisons. 

Consider a coordinate system in which the origin lies outside of the hull, and consider the plane $p$ formed by the $x$- and $y$-axes. Let the ``upper hull'' be the portion of the hull that can see this plane and the ``lower hull'' be the portion that cannot. This test can be done via an constant number of primitive operations per face, and so can be performed in parallel in $O(\log n)$ span and $O(n\log n)$ work using the trivial repetition strategy. 

We focus on the upper hull, as the case for the lower hull is symmetric. We can project each point of the upper hull simply by ignoring its $z$-coordinate. By general position, each face of our hull is a triangle. Therefore, this projection is a triangulation of a convex polygon. We can use this projection to determine if some non-hull point $u$ lies below the upper hull. First, we project $u$ onto $p$, creating the projected point $u^*$. We use planar point-location to determine the triangle $\Delta^*$ it appears in. This triangle maps to a face of the upper hull, $\Delta$, that intersects the ray $\protect\overrightarrow{uu^*}$. Next, we simply perform an orientation test between $u$ and $\Delta$ to determine if $u$ is inside or outside of the upper hull. We can repeat this point location test symmetrically on the lower hull. If any $u$ fails either test, then we conclude that our convex hull is invalid. 

\begin{figure*}
    \centering
    \includegraphics[width=0.45\linewidth]{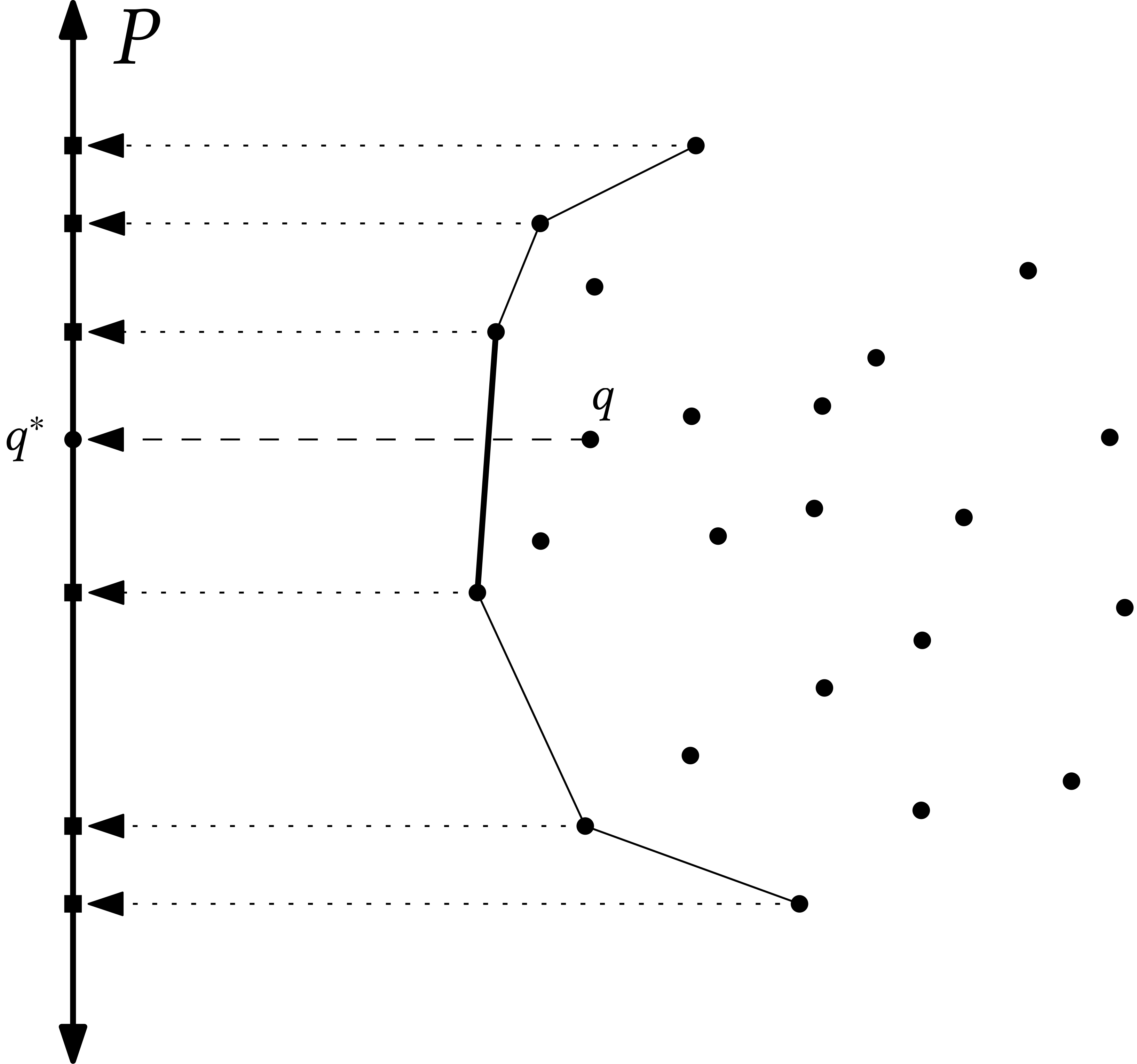}
    \caption{Here we illustrate our approach to failure sweeping a 3D hull. We define a coordinate system with its origin outside of the given hull and let $P$ be the $y$-axis (the $x$-$y$ plane in 3D). We define the upper hull as the hull points that can see $P$. We then perform an orthogonal projection of the upper hull points onto $P$, inducing a set of intervals. In 3D, this would induce a trianglated polygon onto $P$. We can project a query point $q$ onto $P$ and determine what region of $P$ $q^*$ lies in. This tells us what portion of the hull $\protect\overrightarrow{qq^*}$ stabs. In 2D, this requires a single binary search. However, in 3D we require a planar point-location structure to determine the triangle $q^*$ lies in. }
    \label{fig:3D-failure-sweep}
\end{figure*}

\subsubsection{Constructing a Planar Point-Location Data Structure on a Triangulated Convex Polygon} 
While the projections done to produce our triangulated convex polygon on plane $p$ are without noise, the work needed to construct the planar point location data structure does involve noisy primitive operations. 

To build a PPL data structure, we modify the randomized parallel algorithm of Reif and Sen~\cite{reif1992optimal} that constructs a Kirkpatrick decomposition~\cite{kirkpatrick1983optimal} of a triangulated polygon of $n$ points taking $O(\log n)$ span and $O(n\log n)$ work with high probability in $n$. Their basic strategy is to, at each step, have each vertex of degree $\leq d=O(1)$ independently flip a coin. Vertices that flip heads and have no neighbors that also flip heads are pruned. The gaps left behind are re-triangulated. While constructing the decomposition, Reif and Sen at the same time construct a point-location search DAG. Each triangle is represented by a node in the DAG. Triangles destroyed in a given iteration have a constant number of new triangles that overlap them. Nodes for each new triangle are given a pointer to every triangle destroyed at this step that they overlap with. Each of these operations takes $O(1)$ time in the non-noisy setting.

They prove that, over $O(\log n)$ steps, we remove a constant fraction of vertices at each step such that the last level of the decomposition is of size 
$O(\log n)$ with high probability in $n$.
Because there are $O(\log n)$ steps, each takes $O(1)$ time per point, and there are $O(n)$ points, the algorithm takes $O(n\log n)$ work and $O(\log n)$ span. They also observe that the depth of the resulting search structure is $O(\log n)$~\cite{reif1992optimal,kirkpatrick1983optimal}, allowing for $O(\log n)$-time sequential queries.

To convert this into a noise-tolerant algorithm, we begin by discussing how to introduce a bounding triangle to our triangulated convex polygon. We introduce our bounding points $a,b,c$. To triangulate the boundary of our polygon, we assign each hull point of our polygon a processor. We consider one bounding point at a time. Using $O(1)$ noisy orientation tests, each point on our polygon's hull can determine whether it ``sees'' bounding point $a$, and if so can draw an edge to it. It is easy to see that none of these edges cross. This takes $O(\log n)$ time per point using the trivial repetition strategy.

Now we consider $b$. Say that point $p$ already has an edge to $a$. If $p$'s neighbor in the direction of $b$ also has an edge to $c$, then $p$ should not draw an edge to $b$ as it will cross its neighbor's edge to $a$. Otherwise, $p$ draws an edge to $b$. Again, this takes $O(\log n)$ time per point using the trivial repetition strategy. Incorporating $c$ involves a similar process. There are $O(n)$ points $p$ and each can be processed independently at each of the three steps, so this takes $O(\log n)$ span and $O(n\log n)$ work.



After this, at each iteration, we have three steps: (1) sampling the independent set, (2) re-triangulating polygons, and (3) updating the search DAG. 
We first observe that the sampling step is without noise as we are just generating, reading, and comparing labels on points. For the re-triangulating step, we can remove noise via an initial noisy sorting computation on the $x$-coordinates of all $O(n)$ points in $O(\log n)$ span and $O(n\log n)$ work~\cite{feige1994computing,goodrich2023optimal}. 
If we label each point with its position in the sorted order, a single processor can re-triangulate a polygon of constant size in $O(1)$ time by comparing each point's labels with the others to determine in what order to sweep the polygon.

Lastly, we have to update the search DAG, which requires us to determine whether a destroyed triangle intersects with a constant number of new triangles. Because the triangles are created throughout the algorithm, it does not seem that we can preprocess the points to remove noise from these operations. Thus, if we were to construct the DAG as we compute the decomposition, the trivial repetition strategy would require that each of the $O(\log n)$ steps take $O(\log n)$ span. Instead, we apply a two-pass system. 

On the first pass, construct the Kirkpatrick decomposition and create a dummy node for each triangle. We can associate each dummy node with a triangle of the decomposition and with the triangles that replaced it with a constant number of links per node. On the second pass, consider each triangle and their corresponding DAG node in parallel. For a triangle destroyed at step $i$, it determines which of the at most $d = O(1)$ triangles created in the corresponding gap at step $i$ intersect it. This requires $O(1)$ primitive operations, and so can be performed in $O(\log n)$ time using the trivial repetition strategy. Each triangle then draws edges from the new triangles that overlap it to itself, creating our search structure. This takes $O(\log n)$ time per triangle and there are at most $O(n)$ triangles in the search structure~\cite{kirkpatrick1983optimal}, so this takes $O(\log n)$ span and $O(n\log n)$ work.

\subsubsection{Querying the Planar Point-Location Data Structure Efficiently}

As we described earlier, Reif and Sen's algorithm stops when the decomposition is of size $O(\log n)$~\cite{reif1992optimal}. This is sufficient in the non-noisy case as a query point can simply brute-force search through all triangles at this first level in $O(\log n)$ time to determine which node of the DAG it should start at. Here, however, the trivial repetition strategy would add an extra log factor to this brute-force search. The following lemma will help us construct another point-location structure to help us find the initial triangle in $O(\log n)$ time w.h.p. in $n$. One can think of this approach as a brute-force parallelization of the planar point location methods of Cole~\cite{cole1986searching} and Sarnak and Tarjan~\cite{sarnak1986planar}.
\begin{lemma}\label{lem:trivial-PPL}
    Consider a planar subdivision of size $r \leq n$. Using $O(\log n)$ span, $O(r^3\log n)$ work, and $O(r^2)$ space, we can construct a point-location data structure with high probability in $n$ that allows for $O(\log n)$-time queries w.h.p. under noisy primitives. Construction requires $O(r^3)$ noisy operations and querying requires $O(1)$ noisy operations.
\end{lemma}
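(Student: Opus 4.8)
The plan is to build a slab-based point-location structure, i.e., a brute-force (non-persistent) realization of the Sarnak--Tarjan~\cite{sarnak1986planar} and Cole~\cite{cole1986searching} approach, in which every comparison needed during \emph{construction} is made reliable by the trivial repetition strategy and every comparison needed during a \emph{query} is made reliable by a noisy binary search~\cite{renyi,feige1994computing}. Since $r\le n$, a $\Theta(\log n)$-fold repetition makes each construction comparison correct w.h.p.\ in $n$, and a union bound over the $O(r^3)$ of them keeps the whole structure correct w.h.p.\ in $n$; this is exactly what the extra $\log n$ factor in the work bound buys.

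First I would sort the $O(r)$ vertices of the subdivision by $x$-coordinate: comparing all $O(r^2)$ pairs with trivial repetition and then applying a non-noisy prefix/ranking step yields the sorted order using $O(r^2)$ noisy operations, $O(r^2\log n)$ work, and $O(\log n)$ span. The $O(r)$ resulting $x$-intervals are the \emph{slabs}. For each slab $S$, the set of edges crossing $S$ is determined purely combinatorially (index comparisons on the sorted vertex ranks), hence non-noisily; as each edge crosses at most $O(r)$ slabs, the total size of all these lists is $O(r^2)$, which is the space bound. Within a fixed slab the crossing edges are pairwise non-crossing, so they are totally ordered vertically, and a single $O(1)$-primitive ``which of two edges is higher in $S$'' test decides any pairwise order. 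Running this test with trivial repetition on all $O(r^2)$ pairs in each of the $O(r)$ slabs costs $O(r^3)$ noisy operations, $O(r^3\log n)$ work, and $O(\log n)$ span; a non-noisy ranking then sorts each slab's edge list, and the face occupying each gap between consecutive edges is recorded (again non-noisily, from the subdivision's combinatorics).

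To answer a query for a point $q$, I would (i) locate the slab containing $q$ by a noisy binary search over the sorted vertex $x$-coordinates, and (ii) inside that slab, locate the gap containing $q$ by a noisy binary search over the slab's vertically-sorted edge list, each comparison being a single orientation test; the face recorded for that gap is the answer. Each of these is one noisy binary search of depth $O(\log r)=O(\log n)$, so a query runs in $O(\log n)$ time, performs only $O(1)$ noisy binary searches, and is correct w.h.p.\ in $n$; a union bound over queries keeps the overall failure probability polynomially small. Summing up: $O(\log n)$ span, $O(r^3\log n)$ work, $O(r^2)$ space, $O(r^3)$ noisy operations to build, and $O(1)$ noisy operations per $O(\log n)$-time query.

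The step I expect to be the main obstacle is guaranteeing that the all-pairs, trivially-repeated ``higher edge'' comparisons within a slab induce a \emph{consistent} total order, so that the non-noisy ranking is well defined: this requires every one of the $\Theta(r^3)$ comparisons to be simultaneously correct, which the $\Theta(\log n)$-fold repetition plus the union bound (using $r\le n$) provides. The secondary care is in (a) phrasing the within-slab ``higher'' test and the query's ``above edge $e$'' test as genuine constant-size geometric primitives that vary monotonically along the sorted edge list, so that binary search is valid, and (b) verifying the $O(r^2)$ bound on total edge-list size and that a face can be attached to each gap with no noisy operations.
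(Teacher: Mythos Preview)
Your approach is correct and is essentially the paper's: both build a brute-force slab decomposition in the spirit of Sarnak--Tarjan/Cole, make every construction comparison reliable via trivial repetition, and answer queries with two noisy binary searches (slab, then within-slab), yielding the stated span, work, space, and noisy-operation counts.

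The only real implementation difference is in how each slab region gets its face label. You sort the $O(r)$ crossing edges within each slab by all-pairs ``higher'' tests ($O(r)$ slabs $\times$ $O(r^2)$ pairs $=O(r^3)$), and then read off the face for each gap from the edge--face incidences; this is exactly where your obstacles (a) and (b) live. The paper sidesteps both: it enumerates the $O(r^2)$ slab regions directly, then for each of the $O(r^3)$ (region, triangle) pairs tests incidence with $O(1)$ primitives under trivial repetition, and a prefix over the $O(r)$ triangles picks out the unique containing triangle for each region. That variant never needs a consistent edge order within a slab and never needs to attach faces to gaps, so your two flagged obstacles simply do not arise; the within-slab list it sorts for querying is the list of \emph{regions}, not edges. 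Either route gives the same bounds, but the paper's region--triangle incidence test is a bit cleaner for exactly the reasons you anticipated.
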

\begin{proof}
    We begin by drawing a vertical line at each point, subdividing the plane into $r+1$ slabs. Each slab has at most $O(r)$ regions in it, so there are at most $O(r^2)$ regions. There were $O(r)$ initial triangles, so there are $O(r^3)$ triangle-region pairs. For each pair, we can determine whether the triangle and region intersect. This requires $O(1)$ noisy comparisons per pair. A parallel prefix operation can be used to combine all $O(r)$ outcomes for each region to determine the unique triangle that corresponds to it. In total, this takes $O(r^3\log n)$ work and $O(\log n)$ span for each operation to succeed w.h.p. in $n$.

    We can sort the slabs by their $x$-coordinate in $O(r\log n)$ work and $O(\log n)$ span w.h.p. in $n$. We can then sort each slab's regions vertically in $O(r^2\log n)$ total work and $O(\log n)$ span w.h.p. in $n$.

    Given a query point $q$, we can use two noisy binary searches to locate the region in which $q$ lies. The first determines the appropriate slab and the second determines the appropriate region within the slab. 
\end{proof}

In our case, $r= O(\log n)$, so total work is $o(n\log n)$ and extra space usage is $o(n)$. Lastly, it remains to show how a processor can navigate the search DAG in $O(\log n)$ time w.h.p. in $n$. To do this, we will construct a transition oracle for the Kirkpatrick decompositon DAG such that we can apply path-guided pushdown random walks. First, we observe that a given query point must have a unique valid path in the Kirkpatrick decomposition. Level $i$ of the DAG corresponds to triangulation $G_i$ of the decomposition. By definition of a triangulation, no triangles overlap, meaning that there is a unique triangle at level $i$ that corresponds to $q$. Therefore, there is a unique node at level $i$ of the DAG that is valid for $q$. This is true at all levels of the DAG, so each $q$ has a single valid path through the DAG. 

It also follows that we can determine whether we are on the correct path using a single noisy primitive. We simply check whether $q$ is located inside the triangle represented by our current node in the DAG. Because the Kirkpatrick decomposition search DAG satisfies these requirements,~\cite{eppstein2025computational} show that a single query point can navigate the DAG in $O(\log n)$ time w.h.p. in $n$ by instantiating path-guided pushdown random walks (see Appendix~\ref{sec:random-walks}). We conclude the following:

\begin{lemma}\label{lem:planar-point-location}
    Given a triangulation of $n$ points in the plane, we can construct a point-location data structure to determine which triangle a query point lies in under noisy primitives. It takes $O(\log n)$ time to construct the data structure and $O(\log n)$ time to query it w.h.p. in $n$. Construction uses $O(n)$ noisy operations. Querying uses $O(1)$ noisy operations.
\end{lemma}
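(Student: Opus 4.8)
The plan is to assemble the components developed in this subsection into one construction-and-query pipeline and to account for noise at every stage. First I would take the randomized parallel algorithm of Reif and Sen~\cite{reif1992optimal} for building a Kirkpatrick decomposition~\cite{kirkpatrick1983optimal} of the triangulated convex polygon obtained by projecting the (upper or lower) hull: it runs in $O(\log n)$ rounds, each round pruning an independent set of constant-degree vertices, re-triangulating the gaps, and extending a search DAG, with the top level having size $O(\log n)$ and the DAG depth $O(\log n)$, both w.h.p.\ in $n$. I would then classify the three per-round operations by whether they touch geometry: sampling the independent set only generates and compares vertex labels, hence is noise-free; re-triangulating the constant-size gap polygons becomes noise-free after a single preprocessing noisy sort of all $x$-coordinates (via~\cite{goodrich2023optimal}, $O(\log n)$ span and $O(n\log n)$ work), since a processor can then order a constant number of points by their precomputed ranks in $O(1)$ time; and installing the bounding triangle is handled by having each boundary vertex run $O(1)$ noisy orientation tests with the trivial repetition strategy in $O(\log n)$ time, in parallel over the $O(n)$ vertices.

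The one operation that genuinely resists preprocessing is updating the search DAG, because the triangles whose pairwise overlaps must be determined are created dynamically across rounds, so there is no static set to sort or project in advance. Naively running trivial repetition inside the $O(\log n)$-round loop would make each round cost $\Theta(\log n)$ span and inflate the total span to $\Theta(\log^2 n)$. I would instead use the two-pass scheme: the first pass builds the decomposition and creates a dummy DAG node for each triangle, linked to the triangles that replaced it; the second pass considers all $O(n)$ triangles in parallel and, for each destroyed triangle, runs the $O(1)$ overlap primitives (trivial repetition, $O(\log n)$ time) against the constantly many new triangles of its gap, drawing the DAG edges. Since this is a single parallel pass, it costs $O(\log n)$ span and $O(n\log n)$ work. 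Summing over all stages gives construction in $O(\log n)$ span, $O(n\log n)$ work, and $O(n)$ noisy operations, the last because the sort, the orientation tests, and the overlap tests each contribute $O(n)$ noisy operations.

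For queries I would first handle the $O(\log n)$-size top level of the decomposition: rather than brute-force scanning it (which trivial repetition would blow up by a $\log$ factor), I invoke Lemma~\ref{lem:trivial-PPL} with $r=O(\log n)$ to build an auxiliary point-location structure in $O(\log n)$ span, $O(\log^3 n)=o(n\log n)$ work, and $O(\log^2 n)=o(n)$ extra space, yielding $O(\log n)$-time queries w.h.p.\ that return the correct top-level triangle. From there I navigate the search DAG with a path-guided pushdown random walk. The two prerequisites of~\cite{eppstein2025computational} hold: each query point has a \emph{unique} valid root-to-leaf path, since the triangulation $G_i$ at level $i$ has pairwise-disjoint triangles and so contains the point in exactly one triangle (one DAG node); and a single noisy primitive---is the point inside the current node's triangle?---tests whether we are on the correct path. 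This is exactly a transition oracle, so by Theorem~1 of~\cite{eppstein2025computational}, a DAG of depth $O(\log n)$ is traversed in $O(\log n)$ steps with $O(1)$ noisy operations, w.h.p.\ in $n$. Chaining the auxiliary lookup with the walk gives $O(\log n)$-time queries w.h.p.\ using $O(1)$ noisy operations, which completes the lemma.

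I expect the search-DAG update to be the main obstacle, since it is the unique stage where the noisy geometric tests cannot be batched out by an up-front sort or projection; the deferred single-pass construction of the DAG edges is the key device that keeps the span at $O(\log n)$. A secondary point to verify is that the top-level auxiliary structure stays within budget, which it does precisely because $r=O(\log n)$ renders the $r^3$ work and $r^2$ space of Lemma~\ref{lem:trivial-PPL} sublinear.
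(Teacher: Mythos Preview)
Your proposal is correct and follows essentially the same route as the paper: the Reif--Sen Kirkpatrick construction with the noise-free sampling step, the preprocessing sort to make re-triangulation noise-free, the two-pass deferred DAG-edge construction to keep span at $O(\log n)$, the auxiliary structure of Lemma~\ref{lem:trivial-PPL} for the $O(\log n)$-size top level, and the path-guided pushdown random walk for descent. You even single out the DAG-update step as the bottleneck and resolve it exactly as the paper does.
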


\begin{cor}
    We can perform failure sweeping on a halfspace intersection of $m$ points in $O(m\log M)$ work and $O(\log M)$ span w.h.p. in $M$. This uses $O(m)$ noisy operations. 
\end{cor}
\begin{proof}
    Follows from Lemma~\ref{lem:planar-point-location} and the above discussion. Using the point-location data structure instantiated with failure probabilities in $M$, we can project each point $q$ to the $xy$-plane, and determine which triangle the ray $\protect\overrightarrow{qq^*}$ stabs in $O(\log M)$ time per point. From there we can perform one noisy orientation test to determine if $q$ lies inside or outside the hull, which takes $O(\log M)$ time using the trivial repetition strategy. There are $O(m)$ points, so this takes $O(m\log M)$ work and $O(\log M)$ span.
\end{proof}

\begin{theorem}[Same as Theorem~\ref{lem:failure-sweep-3D}]
    Any parent subproblem of size $M$ can failure sweep all of its child subproblems in $O(M\log M)$ work and $O(\log M)$ span w.h.p. in $M$. This uses $O(M)$ noisy operations. 
\end{theorem}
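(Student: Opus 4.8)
The plan is to obtain Theorem~\ref{lem:failure-sweep-3D} by running the per-subproblem routine of the preceding corollary on every child of the parent in parallel and then summing the resource bounds. A parent subproblem of size $M$ is split along the cones of the sampled polyhedron $\mathcal{R}$; since $|R|\le M^{1/8}$, there are $k=O(M^{1/8})$ cones and hence $k=O(M^{1/8})$ child subproblems. Write $m_1,\dots,m_k$ for their sizes. Two structural facts drive the argument: (i) because $R$ is a good sample, each $m_i=O(M^{7/8}\log M)$, so $\log m_i=\Theta(\log M)$ and in particular $m_i\le M$; and (ii) Clarkson's bound on a good sample, reinforced by the Amato--Goodrich--Ramos pruning of Lemma~\ref{lem:pruning-runtime}, gives $\sum_i m_i=O(M)$ for the immediate children of a size-$M$ subproblem.

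First I would invoke the corollary on each child: sweeping child $i$ costs $O(\log M)$ span, $O(m_i\log M)$ work, and $O(m_i)$ noisy operations, and succeeds w.h.p.\ in $M$. Running all $k$ sweeps concurrently, the span is $\max_i O(\log M)=O(\log M)$; the total work is $\sum_i O(m_i\log M)=O\!\big((\textstyle\sum_i m_i)\log M\big)=O(M\log M)$ by fact (ii); and the total number of noisy operations is $\sum_i O(m_i)=O(M)$, again by fact (ii). The bookkeeping that assembles the per-child verdicts---a prefix operation over the $O(M)$ points recording whether every child passed its convexity test and its point-location test---is non-noisy and adds only $O(\log M)$ span and $O(M)$ work, so it does not change the stated bounds.

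It remains to lift the ``w.h.p.\ in $M$'' guarantee from a single child to all $k$ children simultaneously. Each child sweep fails with probability at most $M^{-c}$ for the constant $c$ we fix when calibrating the trivial-repetition tests and the noisy planar-point-location queries inside the corollary; a union bound over the $k=O(M^{1/8})$ children yields total failure probability $O(M^{1/8-c})$, which is still $1/M^{\Omega(1)}$ once $c$ is chosen to exceed the target exponent by at least $1/8$. Since the corollary already internally union-bounds over the $O(m_i)$ primitive operations performed inside child $i$, no further accounting of individual primitive errors is needed beyond this one constant-factor adjustment of $c$, which affects only constants in the work and span.

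The step I expect to be the crux is not the resource summation but the justification of fact (ii): that after pruning, the total size of all child subproblems of a size-$M$ parent is $O(M)$, and that this remains true even once the auxiliary subproblems created by the size-reduction preprocessing are included. This is exactly the point already flagged in the proof of Theorem~\ref{lem:brute-force-thm}: size reduction never inflates the per-level total problem size beyond $O(M)$ because pruning is applied at every level, so the $O(M\log M)$-work, $O(\log M)$-span, and $O(M)$-noisy-operation bounds of the sweep are preserved verbatim. Everything else is a straightforward parallel-for over the children combined with a $k$-term union bound.
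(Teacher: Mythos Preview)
Your proposal is correct and takes essentially the same approach as the paper: run the per-child sweep of the preceding corollary in parallel over all children and aggregate using $\sum_i m_i = O(M)$. The paper's own proof is a two-line version of exactly this (``Follows from above done in parallel over all subproblems. By Lemma~\ref{lem:pruning-runtime}, the total size of all subproblems is $O(M)$''); your added detail about the union bound over $k=O(M^{1/8})$ children and the remark about size reduction not inflating the per-level total are both consistent with what the paper records elsewhere (in the proof of Theorem~\ref{lem:brute-force-thm}).
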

\begin{proof}
    Follows from above done in parallel over all subproblems. By Lemma~\ref{lem:pruning-runtime}, the total size of all subproblems is $O(M)$. 
\end{proof}

\subsection{Deferred Proofs}\label{sec:3D-supp-pf}
\begin{theorem}[Same as Theorem~\ref{lem:3D-final-thm}]
    We can compute the intersection of $n$ halfspaces in $O(\log n)$ span and $O(n\log n)$ work w.h.p. in $n$, even with noisy primitive operations.
\end{theorem}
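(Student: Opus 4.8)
The plan is to argue by induction on the recursion depth $l$, mirroring the structure of the 2D proof (Theorem~\ref{thm:2dc}) but substituting the 3D machinery developed above. The inductive hypothesis is that a subproblem at recursive level $l$, of size $n_l = n^{(9/10)^l}$, can be solved in $O(\log n_l)$ span and $O(n_l\log n_l)$ work while failing with probability at most $n_l^{-c'}$, for a suitably large constant $c'$. For the base case, a subproblem of size $O(\mathrm{polylog})$ is solved directly with the noise-tolerant non-optimal algorithm of Lemma~\ref{lem:3D-nonoptimal} (or Lemma~\ref{lem:3D-brute-force}), tuned to fail with probability at most $n_l^{-c'}$; since the subproblem is polylogarithmic this only changes constant factors.

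For the inductive step, consider a parent subproblem of size $M$. Its children are produced by the noise-tolerant adaptation of Reif and Sen's construction: polling for a good sample $R$ with $|R|\le M^{1/8}$, computing $\mathcal R$ and the Dobkin--Lipton-style point-location structure, distributing halfspaces to cones, and pruning via Lemma~\ref{lem:pruning-runtime}. Each such step succeeds w.h.p.\ in $M$, uses $O(M)$ noisy operations, and costs $O(\log M)$ span and $O(M\log M)$ work; absorbing a constant into the per-operation failure exponent $c$ and taking a union bound over those $O(M)$ operations, the whole construction layer of the parent fails with probability $O(M^{-(c-1)})$, a high-probability bound in $M$. By the inductive hypothesis, however, each child of size $m$ fails with probability only $m^{-c'}$, which can be far larger than $M^{-c'}$; closing this gap is precisely the job of failure sweeping.

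I would then invoke the three 3D ingredients in sequence. First, run the failure-sweep procedure of Theorem~\ref{lem:failure-sweep-3D} on every child, which certifies convexity by one orientation test per vertex and verifies point membership via the noisy planar-point-location structure of Lemma~\ref{lem:planar-point-location}, all in $O(\log M)$ span, $O(M\log M)$ work, and $O(M)$ noisy operations w.h.p.\ in $M$. Second, apply size reduction: any child of size $m = \omega(M^{1/8})$ is decomposed, via at most eight levels of the modified recursion of Lemma~\ref{lem:3D-medium-brute-force}, into $O(m)$ induced subproblems of size $O(M^{1/8})$, each tuned to fail w.h.p.\ in $M^{1/8}$, with pruning keeping their total size $O(M)$ so the work recurrence is unaffected. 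Third, by Lemma~\ref{lem:3D-markov} only a constant number of the $O(M^{9/8})$ induced subproblems fail w.h.p.\ in $M$, and by Theorem~\ref{lem:brute-force-thm} we recompute those (each of size $O(M^{1/8})$) in $O(\log M)$ span and $O((M^{1/8})^{3/2}\log M)\subset O(M\log M)$ work. Recombining up through the size-reduction levels and the cone-recombination step — each costing $O(\log M)$ span and $O(M\log M)$ work — yields every child's value correctly, and the parent then fails with probability at most $M^{-2c'}$ once $c'$ and the per-operation exponent $c\ge 1+c'$ are chosen large enough to satisfy the Markov-inequality constraint from Lemma~\ref{lem:3D-markov} together with the union bounds; this is the same arithmetic as in the 2D proof and I would carry it out explicitly to pin down $c'$.

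Putting the stages together, each subproblem of size $M$ costs $O(\log M)$ span and $O(M\log M)$ work, and the pruning invariant keeps every recursive level at total size $O(n)$, so the span and work obey exactly Reif and Sen's recurrences $T(n)\le T(n^{9/10})+O(\log n)$ and $W(n)\le W(n^{9/10})+O(n\log n)$, which telescope geometrically to $O(\log n)$ and $O(n\log n)$. Since every level-$l$ subproblem now succeeds w.h.p.\ in $n_l$, Theorem~\ref{lem:span-thm} upgrades the span bound to $O(\log n)$ w.h.p.\ in $n$ and Corollary~\ref{lem:work-thm} upgrades the work bound to $O(n\log n)$ w.h.p.\ in $n$. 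I expect the main obstacle to be making the confidence ``upgrade'' airtight: verifying that failure sweeping plus size reduction plus bounded brute-force genuinely converts children reliable only in their own shrinking sizes into a parent reliable in $M$, while the eightfold size-reduction fanout and the constant-factor growth in total problem size never push the per-level work past $O(M\log M)$ — this is where Lemma~\ref{lem:3D-markov} and the pruning invariant do the heavy lifting.
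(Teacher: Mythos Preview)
Your proposal follows essentially the same approach as the paper and invokes the right ingredients, but the order in which you deploy them is inverted in a way that breaks the argument. You place failure sweeping first and size reduction second, after the children have already been recursively solved. In the paper, size reduction is a \emph{preprocessing} step applied \emph{before} recursion: every child of size $\omega(M^{1/8})$ is first decomposed (w.h.p.\ in $M$) into induced subproblems of size $O(M^{1/8})$, and only then do we recurse on those induced subproblems, each tuned to fail w.h.p.\ in $M^{1/8}$; failure sweeping and brute-force recomputation come after this recursion returns. The order matters because if you recurse directly on an original child of size up to $O(M^{7/8}\log M)$ and it fails, recomputing it via Lemma~\ref{lem:3D-medium-brute-force} costs $O(m^{3/2}\log M)$ work, which can be $\Theta(M^{21/16}\log M)$ and blows the $O(M\log M)$ budget. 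Size-reducing first guarantees that every subproblem you might later have to brute-force has size $O(M^{1/8})$, which is exactly what Theorem~\ref{lem:brute-force-thm} needs, and it also makes the failure probabilities uniform (all w.h.p.\ in $M^{1/8}$), which is what the $O(M^{9/8})$-process count in Lemma~\ref{lem:3D-markov} requires.

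Two smaller points: the parent's target failure probability is $M^{-c'}$, not $M^{-2c'}$ (the exponent doubling belonged to the 2D setting where the parent had size $m^2$); and when you carry out the arithmetic, the paper pins down $c'\ge 17$ from Lemma~\ref{lem:3D-markov} and hence $c\ge c'+1\ge 18$ from the union bound over $O(M)$ noisy operations. With the ordering fixed, your argument matches the paper's.
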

\begin{proof}
    We will prove this by induction. Say we are at a parent subproblem of size $M$. Our inductive hypothesis assumes that we can compute each of its child subproblems of size $m$ with failure probability at most $M^{-c'/2}$ in $O(\log M^{1/2})$ span and $O(m\log M^{1/2})$ work. 

    Our Section~\ref{sec:3D-overview} shows that the remaining steps to split and recombine subproblems take $O(\log M)$ span and $O(M\log M)$ work in total. From our discussion above, it takes $O(\log M)$ span and $O(M\log M)$ work to perform size reduction and the corresponding reconstruction after returning from the recursive calls w.h.p. in $M$. Lastly, by Theorem~\ref{lem:failure-sweep-3D}, Theorem~\ref{lem:brute-force-thm}, and our discussion of size reduction, it takes $O(\log M)$ span and $O(M\log M)$ work to failure sweep and recompute all failed subproblems w.h.p. in $M$. This holds as long as $c' > 21/8$.

    It remains to determine the failure parameter $c$ for each individual primitive operation. 
    As noted at the end of Section~\ref{sec:3D-overview}, sampling, constructing the sampled hull and point location structure, and pruning all take $O(M)$ noisy operations in total. 
    Our analysis above also implies that size reduction overall also takes $O(M)$ noisy operations.
    By Theorem~\ref{lem:failure-sweep-3D} and Theorem~\ref{lem:brute-force-thm}, it takes $O(M)$ total noisy operations to failure sweep all problems and recompute any failed subproblems by brute-force. By union bound over all $O(M)$ invocations of noisy primitives, if a single noisy primitive fails with probability $\leq M^{-c}$, at least one fails with probability $O(M^{-(c-1)})$. Then we can take $c \geq c' + 1 > 29/8$ to achieve the desired confidence bound. 

    We have shown that failure sweeping in combination with our new size reduction technique allow us to efficiently upgrade the success probabilities of subproblems to that of their parents. We conclude by analyzing total span. Span is bounded by the following recurrence:
    $$T(n) = T(\sqrt n) + O(\log n),$$
    which evaluates to $O(\log n)$. 
    Total work is proportional to problem size times span at each level of recursion. By pruning, each level of recursion has total problem size at most $O(n_0)$, where $n_0$ is the initial problem size. Then work is bounded by the following recurrence:
    $$W(n) = W(\sqrt n) + O(n_0\log n).$$
    This sums to $O(n_0\log n_0)$. 
    We conclude that it takes $O(\log n)$ span and $O(n\log n)$ work w.h.p. to compute all the intersection of $n$ halfspaces under noisy primitives.

\end{proof}


\end{document}